\newcommand{\cmark}{\ding{51}}%
\newcommand{\xmark}{\ding{55}}%
\newcommand{\D}{\ensuremath{\mathcal{D}}\xspace}
\newcommand{\E}{\ensuremath{\mathcal{E}}\xspace}
\newcommand{\F}{\ensuremath{\mathcal{F}}\xspace}
\newcommand{\gdv}{\ensuremath{\mathcal{G}}\xspace}
\renewcommand{\H}{\ensuremath{\mathcal{H}}\xspace}
\newcommand{\M}{\ensuremath{\mathcal{M}}\xspace}
\newcommand{\N}{\ensuremath{\mathcal{N}}\xspace}
\renewcommand{\P}{\ensuremath{\mathcal{P}}\xspace}
\newcommand{\Q}{\ensuremath{\mathcal{Q}}\xspace}
\newcommand{\T}{\ensuremath{\mathcal{T}}\xspace}
\newcommand{\udv}{\ensuremath{\mathcal{U}}\xspace}
\newcommand{\W}{\ensuremath{\mathcal{W}}\xspace}
\newcommand{\X}{\ensuremath{\mathcal{X}}\xspace}
\newcommand{\Y}{\ensuremath{\mathcal{Y}}\xspace}
\newcommand{\Z}{\ensuremath{\mathcal{Z}}\xspace}
\newcommand{\OO}{\ensuremath{\mathbb{O}}\xspace}
\renewcommand{\SS}{\ensuremath{\mathbb{S}}\xspace}
\newcommand{\threshold}{\ensuremath{\kappa}\xspace}
\newcommand{\RO}{\ensuremath{\mathsf{RO}}\xspace}
\newcommand{\push}[1]{\ensuremath{\mathsf{push}(#1)}\xspace}
\newcommand{\mathcolorbox}[2]{\colorbox{#1}{$\displaystyle #2$}}
\newcommand{\goodcell}[1]{\cellcolor{white}{#1}}
\newcommand{\avgcell}[1]{\cellcolor{white}{#1}}
\newcommand{\badcell}[1]{\cellcolor{white}{#1}}
\newcommand{\skescheme}{\ensuremath{\mathsf{ske}}\xspace}
\newcommand{\skeschemeshort}{\ensuremath{\Sigma}\xspace}
\newcommand{\aux}{\ensuremath{\mathsf{aux}}\xspace}
\newcommand{\adss}[2]{\ensuremath{(#1,#2)\text{-secret-sharing}}\xspace}
\newcommand{\adssschemeshort}[2]{\ensuremath{\Pi}_{#1,#2}\xspace}
\newcommand{\share}[1]{\ensuremath{\mathsf{share}(#1)}\xspace}
\newcommand{\reveal}[1]{\ensuremath{\mathsf{recover}(#1)}\xspace}
\newcommand{\sharenoinput}{\ensuremath{\mathsf{share}}\xspace}
\newcommand{\revealnoinput}{\ensuremath{\mathsf{recover}}\xspace}
\newcommand{\mtag}{\ensuremath{\tau}\xspace}
\newcommand{\voprf}{VOPRF\xspace}
\newcommand{\voprfscheme}{\ensuremath{\mathsf{voprf}}\xspace}
\newcommand{\voprfschemeshort}{\ensuremath{\Gamma}\xspace}
\newcommand{\mpk}{\ensuremath{\mathsf{mpk}}\xspace}
\newcommand{\msk}{\ensuremath{\mathsf{msk}}\xspace}
\newcommand{\pparams}{\ensuremath{\mathsf{pp}}\xspace}
\renewcommand{\rq}{\ensuremath{\mathsf{rq}}\xspace}
\renewcommand{\state}{\ensuremath{\mathsf{st}}\xspace}
\newcommand{\rp}{\ensuremath{\mathsf{rp}}\xspace}
\newcommand{\derive}[1]{\ensuremath{\mathsf{derive}(#1)}\xspace}
\newcommand{\poprfsetup}[1]{\ensuremath{\mathsf{setup}(#1)}\xspace}
\newcommand{\mskgen}[1]{\ensuremath{\mathsf{keygen}(#1)}\xspace}
\newcommand{\req}[1]{\ensuremath{\mathsf{req}(#1)}\xspace}
\renewcommand{\eval}[1]{\ensuremath{\mathsf{eval}(#1)}\xspace}
\newcommand{\finalize}[1]{\ensuremath{\mathsf{finalize}(#1)}\xspace}
\newcommand{\derivenoinput}{\ensuremath{\mathsf{derive}}\xspace}
\newcommand{\ideal}[1]{\ensuremath{\mathsf{Ideal}(#1)}\xspace}
\newcommand{\real}[1]{\ensuremath{\mathsf{Real}(#1)}\xspace}
\newcommand{\inputs}{\ensuremath{\mathsf{inputs}}\xspace}
\newcommand{\stat}{\ensuremath{\stackrel{\text{\tiny s}}{\simeq}}\xspace}
\newcommand{\comp}{\ensuremath{\stackrel{\text{\tiny c}}{\simeq}}\xspace}
\newcommand{\chkequal}{\ensuremath{\stackrel{\text{\tiny ?}}{=}}\xspace}
\newcommand{\leakage}[1]{\ensuremath{\mathsf{L}(#1)}\xspace}
\newcommand{\leakagenoinput}{\ensuremath{\mathsf{L}}\xspace}
\newcommand{\wedgespace}{\ensuremath{\,\wedge\,}\xspace}
\newcommand{\game}[1]{\ensuremath{\gdv_{#1}}\xspace}
\newcommand{\point}[1]{\par\smallskip{\noindent\textbf{#1.}~}}
\newcommand{\tool}[1][]{\textsf{STAR}{#1}\xspace}
\newcommand{\toolfull}{\tool{}\xspace}
\newcommand{\toollite}{\tool{}\textsf{Lite}\xspace}
\newcommand{\titletext}{\texorpdfstring{\tool}{STAR}: \underline{S}ecret Sharing for Private \underline{T}hreshold \underline{A}ggregation \underline{R}eporting}
\newcommand{\repourl}{\footnote{\anonymous{\url{https://bitbucket.org/star-protocol/sta-rs/}}{\url{https://github.com/brave-experiments/sta-rs}}}\xspace}
\theoremstyle{plain}
\newtheorem{theorem}{Theorem}
\newtheorem{claim}{Claim}[section]
\newtheorem{remark}[theorem]{Remark}
\def\isanonymous{0}
\newcommand{\anonymous}[2]{%
\ifthenelse{\equal{\isanonymous}{1}}
{{#1}}%
{{#2}}%
}%
\def\isfullversion{1}
\newcommand{\includefull}[2]{%
\ifthenelse{\equal{\isfullversion}{1}}
{{#1}}%
{{#2}}%
}%
\definecolor{alexpink}{HTML}{FFA0A0}
\author{Alex Davidson}
\affiliation{%
  \institution{Brave Software}
  \country{}
}
\author{Peter Snyder}
\affiliation{%
    \institution{Brave Software}
    \country{}
}
\author{E. B. Quirk}
\affiliation{%
    \institution{Brave Software}
    \country{}
}
\author{Joseph Genereux}
\affiliation{%
    \institution{Brave Software}
    \country{}
}
\author{Benjamin Livshits}
\affiliation{%
    \institution{Imperial College London}
    \country{}
}
\author{Hamed Haddadi}
\affiliation{%
    \institution{Brave Software}
    \country{}
}
\affiliation{%
    \institution{Imperial College London}
    \country{}
}
\title{\titletext}
\date{}
\begin{document}
\sloppy

\begin{abstract}
Threshold aggregation reporting systems promise a practical, privacy-preserving solution for developers to learn how their applications are used ``\emph{in-the-wild}''. Unfortunately, proposed systems to date prove impractical for wide scale adoption, suffering from a combination of requiring: \textit{i)} prohibitive trust assumptions; \textit{ii)} high computation costs; or \textit{iii)} massive user bases. As a result, adoption of truly-private approaches has been limited to only a small number of enormous (and enormously costly) projects.

In this work, we improve the state of private data collection by proposing \tool, a highly efficient, easily deployable system for providing cryptographically-enforced \(\threshold\)-anonymity protections on user data collection. The \tool protocol is easy to implement and cheap to run, all while providing privacy properties similar to, or exceeding the current state-of-the-art. Measurements of our open-source implementation of \tool{} find that it is \(1773\times\) quicker, requires \(62.4\times\) less communication, and is \(24\times\) cheaper to run than the existing state-of-the-art.
\end{abstract}

    
\maketitle
\thispagestyle{empty}
    
\section{Introduction}
\label{sec:intro}

\begin{figure}[t]
    \centering
    \scalebox{0.68}{%
    \begin{tikzpicture}
        \node[rectangle,draw=black] (C) {Client\((x,\aux)\)};
        \node[rectangle,draw=black,right=of C] (R) {Randomness Server};
        \node[rectangle,draw=black,right=of R] (A) {Aggregation Server};
        \node[circle,draw=black,fill=black,below=of C,yshift=-8cm] (endC) {};
        \node[circle,draw=black,fill=black,below=of R,yshift=-8cm] (endR) {};
        \node[circle,draw=black,fill=black,below=of A,yshift=-8cm] (endA) {};
        \draw[black] (C) -- 
            node[yshift=2.8cm] (c1) {} 
            node[yshift=3.3cm] (c1above) {} 
            node[xshift=-1.6cm,yshift=2.35cm,draw=black,fill=orange!10,text width=1.5cm,align=center] (c1lite) {\underline{\toollite} \texttt{derive from} \(x\)} 
            node[yshift=1.8cm] (c2) {} 
            node[yshift=1.4cm] (c2below) {} 
            node[draw=black,fill=green!10,yshift=-0.25cm,text width=1.6cm,align=center] (c4) {\texttt{Generate message}}
            node[yshift=-1.45cm] (c5) {}
        (endC);
        \draw[black] (R) -- 
            node[yshift=2.8cm] (r1) {} 
            node[yshift=1.8cm] (r2) {}
            node[draw=black,fill=green!10,text width=3cm,align=center,yshift=-0.25cm] (r3) {\texttt{Key rotation}}
            node[yshift=-1.45cm] (r4) {}
            (endR);
        \draw[black] (A) -- 
            node[yshift=-1.45cm] (a1) {}
            node[yshift=-3.5cm,draw=black,fill=blue!10,text width=3cm,align=center] (a2) {\texttt{Reveal \((x,\aux)\) from each message if \(x\) sent by \(k\) clients.}}
        (endA);
        \draw[->,very thick] (c1) -- node[above] {\texttt{request(\(x\))}} (r1);
        \draw[->,very thick] (r2) -- node[above] {\texttt{response(rand)}} (c2);
        \draw[->,very thick] (c5) -- node[above,yshift=0.012cm,xshift=-0.21cm,fill=green!10] {\texttt{message}} node[below,yshift=-0.006cm,xshift=-0.2cm,fill=green!10,minimum height=0.1cm,minimum width=0.1cm] {} (a1);
        \draw[dashed] (c1above) -| (c1lite);
        \draw[dashed] (c1lite) |- (c2below);
        \begin{scope}[on background layer]
            \draw[orange,fill=orange!10,very thick,dashed] ($(c1.north west)+(-2.7,1.3)$)  rectangle ($(r2.south east)+(0.2,-0.5)$);
            \draw[green!50,fill=green!10,very thick,dashed] ($(c4.north west)+(-0.5,0.8)$)  rectangle ($(a1.south east)+(0.3,-0.2)$);
            \draw[blue!40,fill=blue!10,very thick,dashed] ($(a2.north west)+(-0.9,0.8)$)  rectangle ($(a2.south east)+(0.2,-0.1)$);
        \end{scope}
        \node[fill=orange,very thick,draw=black] at ($(c1.north west)+(-0.95,0.9)$) {\textbf{Randomness phase}};
        \node[fill=green!30,very thick,draw=black] at ($(c4.north west)+(0.9,0.4)$) {\textbf{Message phase}};
        \node[fill=blue!40,very thick,draw=black] at ($(a2.north west)+(0.8,0.4)$) {\textbf{Aggregation phase}};
    \end{tikzpicture}} 
    \caption{ General \tool architecture. In the \textbf{Randomness
    phase}, clients sample randomness from a dedicated server. In the \textbf{Message phase}, clients generate their messages to send to the aggregation server. The
    aggregation server learns those measurements sent by \threshold
    clients in the \textbf{Aggregation phase}. Client
    randomness can be sampled locally, if the measurement distribution
    is sufficiently entropic (\toollite, Section~\ref{sec:starlite-dists}).}
    \label{fig:generalarch}
\end{figure}

Application developers often need to learn how their product is used, and in which environments their software runs. Such information helps developers debug errors, address security issues, and optimize implementations.

However, collecting such information puts user privacy at risk. Among other concerns, collecting user data, even de-identified, may allow a data collector to profile a user or link records, revealing increasingly rich information about users over time. Naive data collection can harm user privacy either in ways unintended by the developer or unexpected by the user, or both.

A common approach for protecting user privacy when collecting client measurement data is to only learn those measurements that are sent by \(\threshold\) clients (sometimes called \(\threshold\)-heavy-hitters). In these systems, the central server only learns the measurement if there are at least \(\threshold-1\) other clients that provide it as well. This approach prevents the data collector from learning uniquely identifying (or uniquely co-occurring patterns of) values, with the broader goal of preventing the identification of any individuals in aggregate dataset. Such guarantees are strongly related to the privacy notion of \(\threshold\)-anonymity~\cite{IJUFKBS:Sweeney02}. We refer to systems that can provide such guarantees as \textbf{threshold aggregation systems}.

Designers of threshold aggregation systems face a challenging dichotomy though: how to allow a server to determine if it has collected \(\threshold\) identical records, without: i) the server first seeing the underlying measurement value; and ii) in a manner that protects the user against a malicious (or generally untrusted) server.

Many systems have been proposed to try and square this circle~\cite{prochlo,NSDI:CorBon17,SP:BBCGI21,ACMTA:BNS19,PMLR:ZKMSL20,JMLR:BNST20,STOC:BasSmi15,CCS:QYYKXR16,C:KisSon05,Chaum81,CCS:Neff01,AsiaCCS:BlaAgu12,CCS:ErlPihKor14}. However, all such systems to date have properties that make them impractical for most developers and telemetry systems. More specifically, all systems to date have at least one of the following undesirable properties:
\begin{itemize}
    \itemsep0em
    \parskip0em
    \parsep0em
    \item expensive server-side aggregation~\cite{C:KisSon05,AsiaCCS:BlaAgu12};
    \item non-collusion assumptions for servers that communicate with each other~\cite{prochlo,SP:BBCGI21};
    \item interactive communication between clients~\cite{C:KisSon05,Chaum81,CCS:Neff01};
    \item trusted third parties or hardware~\cite{prochlo};
    \item difficult to apply for cases where \(\threshold > 1\)~\cite{Chaum81,CCS:Neff01};
    \item require noise injection, and so require large user bases and/or entail utility loss~\cite{ACMTA:BNS19,PMLR:ZKMSL20,JMLR:BNST20,STOC:BasSmi15,CCS:QYYKXR16,CCS:ErlPihKor14};
    \item restricted to numeric data types~\cite{NSDI:CorBon17,EPRINT:AGJO21};
    \item unbounded worst case leakage~\cite{ACMTA:BNS19,PMLR:ZKMSL20,JMLR:BNST20,STOC:BasSmi15,CCS:QYYKXR16}.
\end{itemize}

\subsection{The \texorpdfstring{\emph{\tool}}{STAR} approach}

In response to these issues in current threshold aggregation systems, we propose \tool{}; a practical, private threshold aggregation system that prioritizes \emph{i)} efficiency (so that it can be deployed at extremely low cost), \emph{ii)} limited trust assumptions (so that the trust requirements can be achieved by a wider range projects), and \emph{iii)} simple, well-established cryptography (so that systems can be implemented and audited by a wider range of developers).

Further, \tool{} provides capabilities existing threshold aggregation systems lack, allowing \tool{} to solve use cases unaddressed by current state-of-the-art. Specifically, \tool{} allows developers to attach arbitrary (but still threshold-protected) data to client messages.

\point{Overall idea} Figure~\ref{fig:generalarch} presents an overview of the \tool{} approach. Each client constructs a ciphertext by encrypting their measurement (and any auxiliary data) using an encryption key derived deterministically from i) any randomness present in the client measurement and ii) additional randomness provided by a ``randomness server''.  This randomness server never learns client values or inputs.

The client then sends: i) the ciphertext; ii) a \(\threshold\)-out-of-\(N\) secret share of the randomness used to derive the encryption key; and iii) a deterministic tag informing the server which shares to combine. The aggregation server groups reports with the same tag, and recovers the encryption keys from those subsets of size \(\geq \threshold\). Thus, the server learns all the measurements that are shared by at least \(\threshold\) clients (along with any auxiliary data).\footnote{Note that similar approaches were highlighted previously by Bittau et al.~\cite{prochlo}, but various complex issues were left as open problems to solve.}

The aforementioned randomness server runs an oblivious pseudorandom function (OPRF) service that allows clients to receive pseudorandom function evaluations on their measurement and the server OPRF key, without revealing anything about their measurement. The clients use the output as randomness to produce the message that is then sent to the aggregation server. Using this framework allows \tool{} to provide strong privacy guarantees for clients, even if the measurement space has low entropy at the point when the aggregation takes place. The randomness server must be non-colluding with respect to the aggregation server, though these servers never have to communicate directly. Note that OPRF services are already being standardized by the Internet Engineering Task Force (IETF)~\cite{I-D.irtf-cfrg-voprf}, and several open-source implementations already exist.\footnote{\url{https://github.com/cfrg/draft-irtf-cfrg-voprf}}

The full \tool protocol is specified in Section~\ref{sec:scheme}. We also describe an alternative form of \tool{}, ``\toollite{}'', that samples randomness only from the measurement itself. This approach is only suitable for sufficiently random data distributions, but removes the need for a distinct randomness server, further simplifying and reducing the costs of private data collection. See Section~\ref{sec:scheme} for more details, and Section~\ref{sec:discussion} for wider discussion of the security guarantees.

\point{Trust assumptions} While \tool protocol is inherently
multi-server, we note that the collaboration model is categorically
weaker than previous cryptographic approaches such
as~\cite{NSDI:CorBon17,SP:BBCGI21,AsiaCCS:BlaAgu12}, where multiple
servers collaboratively compute the output of the aggregation. In
effect, \tool provides the same trust dynamic as submitting plaintext
measurements to an untrusted server over an anonymizing proxy (which
also provides the randomness server functionality), but with the extra
security guarantee that client measurements are hidden until
\threshold-anonymity is provided, and with very little additional
performance overhead.

\point{Simple cryptography} \tool uses simple, well-established cryptographic
tools, that have been used extensively by non-experts for many years.
Previous proposals either use trusted hardware; non-quantifiable
noise-based approaches; or novel, complex, and poorly understood
cryptographic tools.

\point{Performance} To confirm the practicality of \tool, we present and
report on an open-source Rust implementation.\repourl For
processing server-side aggregation of~\(1,000,000\) client measurements,
\tool requires only~\({20.01}\) seconds of computation and a total of~\({222.21}\)MB of
communication, and client overheads are minimal. Overall, \tool is orders of magnitude cheaper to run than previous systems, see Section~\ref{sec:experimental} for more details.

\point{Standardization} \tool is compatible with the IETF's
proposed framework for devising new privacy-preserving measurement
systems~\cite{ietf-ppm}.


\subsection{Formal contributions}
We make the following contributions.
\begin{itemize}
    \item The design, systematization, and formalization of the \tool system, and associated privacy goals.
    \item An open-source Rust implementation of \tool, already used in large-scale deployments.
    \item Empirical evaluation of the \tool protocol that showcases performance and simplicity far superior to previous constructions, while ensuring comparable privacy guarantees.
    \item Specific guidance for navigating trade-offs between additional privacy, and simpler deployment scenarios.
\end{itemize}

\section{Overview of Design Goals}
\label{sec:design-goals}
In this section, we clarify the problem statement that we are tackling, along with subsequently a set of design goals and non-goals that we consider.

\subsection{Problem statement}
\label{sec:problem-statement}

\point{Primary goal} We aim to build a system that allows clients to submit measurements as encoded messages to an untrusted aggregation server. This aggregation server should be able to decode and reveal \emph{only} those measurements that are sent by \(\geq \threshold\) clients, where \(\threshold\) is a public parameter chosen by the aggregation server.

\point{Auxiliary data} Clients should be able to send auxiliary data with their measurements, that can differ from client-to-client and is revealed only if the client's measurement satisfies the threshold aggregation policy.

\subsection{Design goals}
\label{sec:motivating-example}
\label{sec:core-constraints}
\label{sec:goals}
We aim to enable privacy-preserving threshold aggregation data collection through a protocol that both i) provides strong privacy guarantees, and ii) is practical for implementation and adoption by a wide range of projects and organizations; everything from small hobbyist projects to Web-scale software. We particularly aim for a solution for projects that are not well served by existing state-of-the-art (which requires non-trivial budgets, difficult-to-achieve trust assumptions and implementation expertise). To assess suitability, the following points and constraints are crucial to bear in mind.

\point{Client privacy} Any protocol should provide formal guarantees of client privacy in a well-understood and coherent security model, with very limited leakage.

\point{Correctness guarantees}
Any solution must provide \emph{correct} aggregation, rather than approximations that rely on receiving very large amounts of client data for providing high utility.

\point{Low financial costs}
Small projects usually run servers in standard cloud-based hardware such as Amazon Web Services (AWS), so financial costs can run up quickly. Thus, we can neither tolerate expensive cryptographic computation nor costly bandwidth consumption. Ideally, we would like aggregation of \(1\) million client measurements to incur a cost of less than \(1\) dollar.

\point{Achievable trust requirements}
Data aggregation procedures that rely on multi-round interactions with a non-colluding partner are expensive to set up, run, and maintain. In other words, there should be a single aggregation server that must not require communication with any non-colluding parties, at least during the aggregation process.

\point{Avoiding trusted hardware}
Running aggregation in trusted hardware platforms, such as secure enclaves (such as Intel SGX) or cloud-based solutions (e.g. Amazon Nitro enclaves\footnote{\url{https://aws.amazon.com/ec2/nitro/}}), are usually prohibitively expensive and potentially vulnerable to attacks~\cite{arxiv:NilBidBro20}. Overall, requiring trusted hardware significantly increases the complexity of any candidate system.

\point{Limiting cryptographic complexity}
Avoiding novel cryptographic procedures, that are typically expensive to run and require significant expertise to implement, allows those with little cryptographic knowledge to implement applications safely. This decreases the risk of disastrous privacy vulnerabilities, and increases auditability of security guarantees.

\subsection{Non-goals}
\label{sec:non-goals}
Furthermore, we make clear that we are not attempting to solve any of the following problems.

\point{Prevention of Sybil attacks} By their very nature, Sybil attacks~\cite{IPTPS:Douceur01}~---~where a malicious aggregation server injects clients into the system that send messages to try and reveal data from honest clients~---~are an unavoidable consequence of building any threshold aggregation system. Therefore, we will not be attempting to provide security for any client measurements that are targeted by such attacks. We will instead provide a security model that restricts the time window in which such attacks can occur (Section~\ref{sec:scheme}). Our solution will also be compatible with any typical higher-layer defenses that are typically used (such as identity-based certification~\cite{IPTPS:Douceur01}).

\point{Leakage-free cryptographic design} All threshold aggregation systems that approach practical performance involve disclosing small amounts of leakage about client measurements that remain hidden. Combined with external public data, this leakage may become more useful in identity-linkage attacks. Rather than preventing leakage entirely, we will instead show that the \tool approach provides a leakage profile that is comparable with recent work in this area (Section~\ref{sec:security-model}).


\section{Preliminaries}
\label{sec:background}

\label{sec:prelim}
\label{sec:crypto}
We provide the descriptions of each of the cryptographic primitives that are used for constructing the \tool protocol.

\point{General notation}
We use PPT to describe a probabilistic polynomial time algorithm. We use
\([n]\) to represent the set \(\{1,\ldots,n\}\). We use \(x \| y\) to
denote the concatenation of two binary strings. We write
\(\X \comp \Y\) for (computationally indistinguishable) distributions \(\X\) and \(\Y\) iff the advantage of
distinguishing between \X and \Y for any PPT algorithm is negligible. We write \(\X \stat \Y\) if \X and \Y are (statistically) indistinguishable for any algorithms (even if they run in exponential time). 

\point{Symmetric encryption}
\label{sec:symmenc}
We will assume a symmetric key encryption scheme,
\(\skescheme\), that consists of two algorithms: 
\begin{itemize}
    \item \(c \leftarrow \enc(k,x)\): produces a ciphertext \(c\) as the output of encrypting data \(x\) with key \(k\);
    \item \(x \leftarrow \dec(k,c)\): outputs \(x\) as the decryption of \(c\) under key \(k\).
\end{itemize}
We separately use \(\derivenoinput\) to
denote an algorithm that accepts a seed and a security parameter as
input, and returns a randomly sampled encryption key. We will assume that, for randomly sampled keys, \(\skescheme\) satisfies \(\indcpa\) security.

\point{Secret-sharing}
\label{sec:adss}
We assume the usage of a \(\threshold\)-out-of-\(n\) threshold secret
sharing scheme \(\adssschemeshort{\threshold}{n}\) with information-theoretic security, operating in a finite field \(\FF_p\) for some prime \(p>0\). Such a scheme
consists of two algorithms:
\begin{itemize}
    \item \(s \leftarrow \adssschemeshort{\threshold}{n}.\share{z;r}\): a probabilistic algorithm that
    produces a random \threshold-out-of-\(n\) \emph{share} \(s \in \FF_p\) of \(z\);
    \item \((\bar{z},\bot) \leftarrow \adssschemeshort{\threshold}{n}.\reveal{\{s_i\}_{i \in [\ell]}}\): outputs \(\bar{z}\) when \(\ell \geq \threshold\) and each \(s_i\) is a valid share of \(\bar{z}\), otherwise outputs \(\bot\).
\end{itemize}
For security, we require that any
set of shares smaller than \(\threshold\) is
indistinguishable from a set of random strings.\footnote{As is common
for secret sharing schemes, shared messages must be
sufficiently unpredictable~\cite{PoPETS:BelDaiRog20}.} We call this
property \emph{share privacy}, and is achieved for secret
sharing approaches based on traditional Shamir secret
sharing~\cite{PoPETS:BelDaiRog20}.

\begin{remark}
    Note that the \sharenoinput algorithm remains probabilistic and samples randomness internally when constructing individual shares. In the language of Shamir secret sharing, the explicit randomness input \(r\) is used to derive the coefficients of the \(\kappa-1\) degree polynomial, \(P\). Shares are then derived by sampling a random value \(c \in \FF_p\) and evaluating \(P(c)\).
\end{remark}

\begin{remark}
    We require
    that \(p\) is large enough that randomly sampling values from
    \(\FF_p\) is highly unlikely to lead to collisions.
    Note that the size of \(p\) does not have any bearing on security.
\end{remark}

\point{Oblivious pseudorandom function protocols}
\label{sec:prelim-oprf}
We assume the presence of a verifiable oblivious pseudorandom function (VOPRF) protocol denoted by \(\voprfscheme\).
Oblivious pseudorandom function (OPRF) protocols were first introduced
by Freedman et al.~\cite{TCC:FIPR05}. They enable a client to receive
PRF evaluations from a server, whilst the client input is kept secret,
and nothing is revealed about the server PRF key. Verifiable OPRFs
(VOPRFs) such as that of Jarecki et al.~\cite{AC:JarKiaKra14} provide
clients with the ability to verify (in zero-knowledge) that the server
has evaluated the PRF properly.

Following the description given by Tyagi et al.~\cite{EPRINT:TCRSTW21}, we define a VOPRF, \voprfscheme, to have the following algorithms:

\begin{itemize}
    \item \(\pparams \leftarrow \voprfscheme.\poprfsetup{\secparam}\): a server-side algorithm that produces public parameters \(\pparams\) for the VOPRF;
    \item \((\msk,\mpk) \leftarrow \voprfscheme.\mskgen{\pparams}\): a server-side algorithm that samples a keypair that is compatible with the input parameters \(\pparams\);
    \item \((\rq,\state) \leftarrow \voprfscheme.\req{x}\): a client-side algorithm that produces a request \(\rq\) and some state \(\state\), from some initial input \(x \in \bin^*\);
    \item \(\rp \leftarrow \voprfscheme.\eval{\msk,\rq}\): a server-side algorithm that produces a response \(\rp\) using a secret key \(\msk\), and a client request \(\rq\);
    \item \(y \leftarrow \voprfscheme.\finalize{\mpk,\rp,\state}\): produces the PRF output on \(\msk\) and the input \(x\) encoded in \(\rq\), using the server response \(\rp\), public key \(\mpk\), and client state \(\state\).
\end{itemize}

We assume a VOPRF protocol that follows the standard ideal functionality, as laid out by Albrecht et al.~\cite{PKC:ADDS21}. Such VOPRFs have been shown to exist based on the One-More-Gap-Diffie-Hellman assumption, with security proven in the UC-security model~\cite{AC:JarKiaKra14}.

It should be noted that there are numerous practical use-cases for
(V)OPRF protocols and their
variations~\cite{PoPETS:DGSTV18,C:KLOR20,DITFB,EPRINT:TCRSTW21},
alongside IETF standardization
efforts~\cite{I-D.irtf-cfrg-voprf,I-D.irtf-cfrg-opaque}, and open-source implementations.\footnote{\url{https://github.com/cfrg/draft-irtf-cfrg-voprf}}

\point{Min-entropy} For a distribution \(\D\) over some input space
\(\X\), the min-entropy of \(\D\) is defined as \(\min_{x \in \X}(-\log_2(\prob{X = x}))\).

\subsection{Protocol security model}
\label{sec:malsec}
In Section~\ref{sec:security-model}, we describe an ideal functionality of the threshold aggregation protocol~---~including inputs, outputs, and potential leakage~---~and use it to show that any attack that is possible in the real world protocol is also possible to launch against the ideal world functionality. Intuitively, this proves that the protocol reveals nothing except what is revealed by the function output \emph{plus} a bounded amount of leakage that is output by a specific leakage function.

\point{Protocol security}
The ideal functionality is denoted by
\(\F_{\P}\) for protocol \P. Let \(\inputs_\H\) and \(\inputs_\adv\)
denote the set of inputs chosen by both honest parties and the adversary \(\adv\), respectively. In addition, let \(\real{\P,\adv;\inputs_\adv,\inputs_\H}\) denote the view of the
adversary \(\adv\) in the real protocol, and
\(\ideal{\F_{\P},\sdv,\adv;\inputs_\adv,\inputs_\H}\)
the view of \(\adv\) when simulated by a PPT algorithm \(\sdv\) that interacts with \(\F_{\P}\). We say that \P is secure against \emph{malicious
adversaries} if, for all choices of inputs, the following equation holds:
\begin{multline}
    \real{\P,\adv;\inputs_\adv,\inputs_\H} \\ 
    \comp \ideal{\F_{\P},\sdv,\adv;\inputs_\adv,\inputs_\H},
    \label{eq:malmpcleakage}
\end{multline}
This security model is commonly referred to as proving security in the \emph{real/ideal-world} paradigm.

\point{Leakage}
The leakage function specifies additional information that the adversary may learn during the protocol that is required for completing the simulation. This information mirrors real leakage that occurs during the protocol execution. We denote by \(\leakagenoinput\) the leakage function that takes as input a set of inputs \(\inputs\) (both honest and adversarial), and outputs some leakage \(\leakage{\inputs}\).

\section{\texorpdfstring{The \tool Protocol Framework}{The STAR Protocol Framework}}
\label{sec:scheme}
\subsection{Notation} 
\label{sec:notation}
\noindent \underline{Participants and protocol parameters:}
\begin{itemize}
    \itemsep0em
    \parskip0em
    \parsep0em
    \item We use \(\P\) to refer to the \tool protocol, and \(\widetilde{\P}\) to refer to \toollite (see Section~\ref{sec:protocol-desc}).
    \item \(\threshold\) is the threshold used for performing aggregation.
    \item \(\cdv\) is the set of all clients \(\{\CC_i\}_{i \in [n]}\).
    \item \(\SS\) is the aggregation server.
    \item \(\OO\) is the randomness server in \(\P\).
\end{itemize}

\noindent \underline{General notation:}
\begin{itemize}
    \itemsep0em
    \parskip0em
    \parsep0em
    \item \(\D\) be the distribution over universe \(\udv\), that clients sample their measurements from.
    \item \((c_i,s_i,t_i)\) is the message sent by \(\CC_i\), as defined in Figure~\ref{fig:protocol}.
    \item \(\X\) is the set of all measurements received by \(\SS\), and let \(\X_\H\) (\(\X_\adv\)) denote the subsets of measurements received from honest (adversarial) clients.
    \item Let \(\E_1 = (x_1,\aux_{x_1},\threshold_1),\ldots,\E_\ell = (x_\ell,\aux_{x_\ell},\threshold_\ell)\) correspond to each of the \(\ell\) unique measurements \(x_i\) sent by clients to \(\SS\). Therefore, \(\aux_{x_i}\) is the collection of all auxiliary data associated with each message containing the measurement \(x_i\), and \(\threshold_i\) is the number of such messages that are received by \(\SS\).
    \item \(\Y\) is the set containing each \(\E_\iota\) where \(\threshold_\iota \geq \threshold\) that is output to \SS.
\end{itemize}

\noindent \underline{Cryptographic tools:}
\begin{itemize}
    \itemsep0em
    \parskip0em
    \parsep0em
    \item \(\voprfschemeshort\) is a \voprf{} (Section~\ref{sec:crypto}).
    \item \((\msk,\mpk)\) is the keypair of \OO for \voprfschemeshort.
    \item \(\skeschemeshort\) is a symmetric encryption scheme
    satisfying \(\indcpa\) security (Section~\ref{sec:crypto}).
    \item \(\adssschemeshort{\threshold}{n}\) is a \adss{\threshold}{n}
    scheme, and let \(\FF_p\) be the associated finite field with order \(p \in \ZZ\) (Section~\ref{sec:crypto}).
    \item \(\adv\) is a malicious PPT adversary.
    \item \(\sdv\) is a PPT simulator.
    \item \(\F_{\P}\) is the ideal functionality corresponding to protocol
    \(\P\), and \(\leakagenoinput\) is the leakage function for \P (Section~\ref{sec:security-model}).
    \item \(\F_{\voprfschemeshort}\) is the ideal functionality corresponding to \voprfschemeshort.
\end{itemize}

\subsection{Design space} We assume a large universe of elements \(\M\)
(e.g., bitstrings of \(\geq 64\) bits) representing potential
\emph{measurements} that clients send to a single, untrusted aggregation
server. For example, such measurements may include profile information
about a user (e.g. browser user-agent), or the set of applications
installed on a device. Clients may \emph{optionally} send arbitrary
additional data with their measurement.

A single encoded measurement is sent during an \emph{epoch} by each
available client. The aggregation server should be able to reveal all
those encoded measurements (and any associated data) that are received
at least \(\threshold\) times. The threshold \(\threshold \geq 1\) is agreed publicly from the outset.

\begin{figure}[!t]
    \raggedright
    \fbox{%
        \scalebox{0.67}{%
            \procedure[codesize=\large]{\tool \textsf{Randomness phase}}{\\[-2ex]
                \boxed{\CC_i(\pparams,x_i,\mpk)}
                \<\<
                \boxed{\OO(\pparams,\msk,\mpk)}\\
                (\rq_i,\state_i) \leftarrow
                \voprfschemeshort.\req{x_i} \< \sendmessageright{length=2cm,top={$\rq_i$}} \< \\
                \< \sendmessageleft{length=2cm,top={$\rp_i$}} \< \rp_i \leftarrow \voprfschemeshort.\eval{\msk,\rq_i} \\
                r_i \leftarrow
                \voprfschemeshort.\finalize{\mpk,\rp_i,\state}
                \<\< \\
                r_{i,1} \| r_{i,2} \| r_{i,3} \leftarrow r_i
                \<\<
                \\[-2ex]
                \\[][\hline]
                \\[-2ex]
                \underline{\text{Key rotation}}\<\< \\
                \<\< \pparams \leftarrow \voprfschemeshort.\poprfsetup{\secparam}\\
                \< \sendmessageleft{length=2cm,top={$(\pparams,\mpk)$}} \<
                (\msk,\mpk) \leftarrow \voprfschemeshort.\mskgen{\pparams}
            }
        }
    }\vspace{0.05cm}
    \begin{pcvstack}[space=0.25cm]
        \procedure[bodylinesep=1mm]{\textsf{\tool Message phase}}{
            \textbf{Inputs}: x_i,\aux_i,(r_{i,j})_{j
            \in [3]},\threshold\\
            \textbf{Outputs}: \text{Client \tool message}\\
            \pcln K_i \leftarrow \derive{r_{i,1},\secparam} \\
            \pcln s_i \leftarrow
            \adssschemeshort{\threshold}{n}.\share{r_{i,1};r_{i,2}} \\
            \pcln c_i \leftarrow \skeschemeshort.\enc(K_i,x_i \|
            \aux_i) \\
            \pcln t_i \leftarrow r_{i,3}\\
            \pcln \pcreturn (c_i,s_i,t_i)
        }
        \procedure[bodylinesep=1mm]{\textsf{\tool Aggregation phase}}{
            \textbf{Inputs}: n\ \text{Client \tool messages}, \threshold\\
            \textbf{Outputs}: \text{List of measurements that were sent} \geq \threshold  \text{ times}\\
            \pcln \Y = [];\\
            \pcln \pcforeach \E_{\iota} = \{(c_j,s_j,t_j) \mid (t_j =
            t_\iota\,\forall\,j)\}: \\
            \pcln \pcind \pcif |\E_\iota| < \threshold: \pcreturn
            \perp \\
            \pcln \pcind (\{c_\iota\},\{s_\iota\}) \leftarrow
            \{\{c_\iota\}.\push{c_j},\,\{s_\iota\}.\push{s_j} \mid (c_j,s_j) \in
            \E_\iota\} \\
            \pcln \pcind r_{\iota,1} \leftarrow
            \adssschemeshort{\threshold}{n}.\reveal{\{s_\iota\}} \\
            \pcln \pcind K_\iota \leftarrow \derive{r_{\iota,1},\secparam} \\
            \pcln \pcind \pcforeach c_j \in \{c_\iota\}: \\
            \pcln \pcind \pcind x_\iota \| \aux_j
            \leftarrow \skeschemeshort.\dec(K_\iota,c_j) \\
            \pcln \pcind \pcind \pcif (\iota \neq 0)
            \wedgespace (x_\iota \neq x_{\iota-1}): \\
            \pcln \pcind \pcind \pcind \pcreturn \perp \\
            \pcln \pcind \pcind \Y[x_\iota].\push{\aux_{j}} \\
            \pcln \pcreturn \Y
        }
    \end{pcvstack}
    \caption{ The \tool protocol for performing threshold aggregation of
        measurements. In the \textsf{Randomness
        phase}, clients sample VOPRF randomness from \OO, and \OO rotates their VOPRF keypair at regular intervals. The
        \textsf{Message phase} sees clients construct an encoded
        message corresponding to their
        measurement. In the \textsf{Aggregation phase}, \SS
        receives encoded messages from clients and learns those measurements (and associated data) that are sent by
        \(\geq \threshold\) clients. }
    \label{fig:protocol}
\end{figure}

\subsection{\texorpdfstring{\tool}{STAR} protocol}
\label{sec:protocol-desc}
The \tool protocol is based upon the principle that clients sharing a
measurement can devise compatible secret shares for a
\(\adss{\threshold}{n}\) scheme. Such shares could then be combined to
reveal the measurement itself (and optionally any additional data that
they send) by an untrusted aggregation server. Once
\(\threshold\) clients send a share of the same value, the server
will be able to recover the hidden value (and any additional data
that is sent).

The algorithmic description of the \tool protocol is given in Figure~\ref{fig:protocol}. We provide a description below as an overview of the entire exchange.

\point{Randomness phase: \tool}
Firstly, each client interacts with the randomness server, \OO, to learn correlated randomness for their measurement \(x_i\). Essentially, the client operates as the client in the \voprf protocol with input \(x_i\), and the randomness server answers the query and returns the result to the client. Note that the client must also possess the public parameters, \pparams, and the public key, \mpk, that \OO produces. The client, after processing the \voprf output to receive \(r_i \in \bin^{3\omega}\) for some \(\omega > 0 \in \ZZ\), now has the result \((x_i, r_i)\). Note that any client that shares the measurement \(x_i\) will also receive the same output \(r_i\). See Section~\ref{sec:prelim-oprf} for a description of the VOPRF exchange. Note that the randomness server should periodically rotate their VOPRF keypair to improve client privacy guarantees and clients should be able to download the new public key data accordingly, see Section~\ref{sec:security-considerations} for more details.

\point{Randomness phase: \toollite}
It is possible to construct a version of \tool, known as \toollite, that provides weaker security guarantees, in favor of dropping the requirement for the randomness server (which can leads to a much simpler practical deployment). The client in the \toollite protocol simply samples \(r_i\) directly from their measurement (for example \(r_i \leftarrow H(x_i)\), where \(H\) is a random-oracle model hash function) before proceeding directly to the message phase. The \toollite protocol only retains security when client measurements are sampled from a suitable high-entropy distribution, see Section~\ref{sec:discussion} for more discussion.

\point{Message phase} The message construction phase consists of the following steps.
\begin{enumerate}
    \item The client with \((x_i,r_i)\) parses \(r_i\) into three parts \((r_{i,1},r_{i,2},r_{i,3}) \in (\bin^\omega \times \bin^\omega \times \bin^\omega)\).\footnote{This can be done for example by running \(r_{i,j} = H(r_i \| j)\), for a random-oracle model hash function \(H\).}
    \item They derive a symmetric key \(K_i\) using a pseudorandom generator where \(r_{i,1}\) is used as the seed.
    \item They construct a random share \(s_i\) of \(r_{i,1}\) using a \(\threshold\)-out-of-\(n\) secret-sharing scheme, using \(r_{i,2}\) as explicit randomness that is used in the share generation process.
    \item They construct the ciphertext \(c_i\) as the encryption of their measurement \(x_i\), and any auxiliary data that they would like to attach, using a symmetric encryption scheme with the previously-derived key \(K_i\).
    \item Finally, they construct their message as \((c_i,s_i,t_i)\), where \(t_i = r_{i,3}\).
\end{enumerate}

As mentioned previously in Section~\ref{sec:adss}, the construction of secret shares in step 3 is a probabilistic algorithm. For Shamir's secret-sharing, the randomness \(r_{i,2}\) is only used to agree on a set of polynomial coefficients, and each client individually samples a random polynomial evaluation to create their share.

\point{Aggregation phase} In the final aggregation phase the aggregation server receives a message from each of \(n\) clients, and learns which of the encoded measurements are shared by at least \threshold clients. The steps are as follows.
\begin{enumerate}
    \item The aggregation server groups together messages that share the same \(t_\iota\) value into subsets \(\E_\iota\).
    \item They discard and subsets with fewer then \(\threshold\) messages, and then do the following for each remaining subset: 
    \begin{enumerate}
        \item runs the share recovery algorithm on the collection of share values \(\{s_\iota\} \in \E_\iota\) to output \(r_{\iota,1}\);
        \item derives the encryption key \(K_\iota\) from \(r_{\iota,1}\);
        \item decrypts each client ciphertext \(c_j\) using \(K_\iota\), and groups together the measurement \(x_\iota\) with the list of the auxiliary data objects, \(\aux_j\), sent by each client.
    \end{enumerate}
    \item Finally, the aggregation server creates a list \(\Y\) of all measurements \(x_\iota\) (along with the attached auxiliary data) that satisfy the threshold \threshold, and outputs \(\Y\).
\end{enumerate}

\subsection{Security Considerations}
\label{sec:security-considerations}
We detail a series of considerations related to the security of the \tool protocol design. The formal security model that we will use for proving security is given in Section~\ref{sec:security-model}, and the proofs are given in Appendix~\ref{app:cryptographic-guarantees}.

\point{Communication between servers} Note that the randomness and aggregation servers only communicate with the clients in the
system, and only one performs the eventual aggregation. This is a
significant improvement on existing multi-server solutions for threshold
aggregation, where the servers are required to communicate with each
other for processing the results of aggregation. Requiring communication between servers quickly drives up costs for both server operators, and tangibly weakens the extent to which both servers are non-colluding. This is because the server operators will have to work together to ensure that their servers can cooperate.

\point{Leakage}
The leakage in the \tool protocol amounts to the aggregation server learning which clients share the same measurement~---~regardless of whether the measurement is kept hidden or not. Similarly, the adversary could launch a ``Sybil'' attack by establishing/corrupting clients with specifically-chosen measurements. As mentioned in Section~\ref{sec:non-goals}, we consider prevention of Sybil attacks a non-goal, since all such threshold aggregation protocols are vulnerable to such attacks. However, we encode the possibility for an adversarial \SS to make use of this leakage into the formal leakage function that is defined as part of our security model in Section~\ref{sec:security-model}.

\point{Randomness server key rotations}
The usage of the randomness server in \tool ensures that an adversarial
aggregation server must communicate with the randomness server to launch
attacks on client inputs, but it does not immediately provide security
to low-entropy inputs. Therefore, we consider a security model where
clients sample randomness in epoch \(\mtag\), and send their encoded
measurement in epoch \(\mtag+1\), after the randomness server has
performed a key rotation. This limits the aggregation server to only
launch online attacks on client inputs before epoch \(\mtag+1\), having
not yet seen any client messages, or observed any leakage. Once this key
is deleted, it is not possible to launch queries that attempt to
identify hidden client values. Moreover, by rotating this key before the
aggregation phase takes place, this ensures that the \SS is only able to
make use of any leakage that may occur \emph{before} they witness any
client measurements. 

In the formal security model defined in Section~\ref{sec:security-model}, we encode this by forcing the adversary \SS to specify up front which values they would like to leak. Importantly, this disables the potential for an adversary to launch a targeted attack based on client identity, or any observed leakage.

\point{Predictable input distributions}
Practical use-cases of \tool require that client messages remain somewhat unpredictable during the randomness phase of the protocol. If measurements are predictable, then the aggregation server may launch queries against the randomness server for all such values during this phase, and then use the leakage to learn which clients are sending predictable values, even if fewer than \(\threshold\) clients send them. The advantage of \tool (as opposed to \toollite) is that this attack can only be carried out during the time-limited randomness phase (before the key rotation occurs), and that the attack must be carried out online. This facilitates usage of extra external protection measures at the randomness server, such as identity-based rate-limiting and verification, to make such attacks even more expensive.

\point{Additional data} Before the protocol
begins, \(\SS\) should inform clients of the maximum length of
the additional data that should be sent. If \(\aux_i\) is not equal to
that length, then it must be truncated or padded depending on whether it
is too long or short, respectively. We make no
guarantees on the shape of auxiliary data for client measurements.

\point{Hardening against local attacks in \tool} All hash function
invocations in \tool can be replaced with functions that are
deliberately slower primitives, such
as PBKDF2~\cite{RFC2898} and scrypt~\cite{RFC7914}. Such functions are
used in applications handling passwords that hope to provide additional
security against password-cracking adversaries. This change only impacts
client computation in a small way, and would increase the difficulty for
any adversarial aggregation server trying to reverse client encoded
measurements. Moreover, such changes similarly increase the difficulty
of attacks in case of a breakdown in the trust model used in \toolfull,
or if using \toollite.

\subsection{Reducing Leakage Via Oblivious Proxies}
\label{sec:high-level-privacy}

\point{Identity leakage}
As with many previous designs of threshold aggregation protocols, \tool
produces a quantifiable amount of leakage. However, the link between
client identity and their input is unbroken.

In some applications maintaining this link is useful. Consider an
aggregation server that is attempting to learn which clients may be part
of a fraudulent botnet of a threshold size, by having clients submit
information about their browser profile. In such cases, it is essential
to link client identity to their sent messages, so that the aggregation
server can subsequently disqualify malicious clients.

However, if an aggregation server is merely trying to learn client
diagnostic information, it is unlikely that maintaining this link is
useful or necessary.

\begin{figure}[t]
    \centering
    \scalebox{0.96}{%
        \begin{tikzpicture}
            \node[fill=blue!30,draw=black] (p1) {};
            \node[fill=blue!30,draw=black,above right=of p1,xshift=-0.5cm] (p2) {};
            \node[fill=blue!30,draw=black,below right=of p2,xshift=-0.5cm,yshift=0.2cm] (p3) {};
            \node[fill=blue!30,draw=black,above right=of p3,xshift=-0.5cm] (p4) {};
            \draw[black,->] (p1) -- (p2);
            \draw[black,->] (p2) -- (p3);
            \draw[black,->] (p3) -- (p4);
            \begin{scope}[on background layer]
                \draw[orange,fill=orange!10,very thick,dashed] ($(p1.south west)+(-0.1,-0.15)$)  rectangle ($(p4.north east)+(0.1,0.15)$);
            \end{scope}
            \node at ($(p2.north west)+(0.4,0.7)$) (title) {\textbf{Oblivious proxy}};
            \node at ($(p1)!0.5!(p4)$) (N) {};
            \node[circle,draw=black,very thick,xshift=-4cm] at (N) (C) {\(\CC\)};
            \node[rectangle,draw=blue,very thick,fill=blue!10,minimum height=2cm,xshift=4cm] at (N) (S) {\(\SS\)};
            \draw[black,->,very thick,>=latex] (C) -- node[above] {\small \texttt{measurement}} node[below] {\small \texttt{client\_id}} ($(N.west)+(-1.25cm,0)$);
            \draw[black,->,very thick,>=latex] ($(N.east)+(1.25cm,0)$) -- node[above] {\small \texttt{measurement}} (S);
        \end{tikzpicture}
    }
    \caption{Oblivious proxy for submitting client (\(\CC\)) measurements to the aggregation server (\(\SS\)).}
    \label{fig:oblivious-proxy}
\end{figure}

\point{Oblivious proxies}
One method for eliminating such leakage in \tool is by using tools for
performing anonymous value submission at the
application-layer~---~destroying the link between client identity and
their messages. For example, by submitting measurements via an
oblivious/anonymizing proxy that strips client identifying information (such as IP addresses)
from HTTP requests containing client measurements, the aggregation
server learns nothing about the client identity (Figure~\ref{fig:oblivious-proxy}). Well-known tools exist
for this purpose such as Tor\footnote{https://www.torproject.org/} (or
certain VPNs) can be used. However, using Tor
comes with well-known performance overheads that would slow down client
requests in \tool considerably~\cite{EPRINT:SerHogDev21}.

\begin{figure}[t]
    \centering
    \scalebox{0.93}{%
        \begin{tikzpicture}
            \node[circle,draw=black,very thick] (C) {\(\CC\)};
            \node[circle,draw=black,very thick,right=of C,xshift=2.5cm,fill=orange!10] (OP) {\(\OO\)};
            \node[rectangle,draw=blue,very thick,fill=blue!10,minimum height=2cm,right=of OP,xshift=2.5cm] (S) {\(\SS\)};
            \draw[black,->,very thick,>=latex] (C) -- node[above,yshift=0.2cm] (A) {\small \texttt{measurement}} node[below,yshift=-0.2cm] (B) {\small \texttt{client\_id}} (OP);
            \draw[black,->,very thick,>=latex] (OP) -- node[above,yshift=0.2cm] (C) {\small \texttt{measurement}} (S);
            \begin{scope}[on background layer]
                \draw[orange,fill=orange!10,very thick,dashed] ($(A.north west)+(-0.2,1)$)  rectangle ($(B.south east)+(0.365,0)$);
                \draw[black,fill=green!10,very thick,dashed] ($(A.north west)+(-0.1,0.4)$)  rectangle ($(A.south east)+(0.1,0)$);
                \node at ($(A.north west)+(0.6,0.15)$) (tls) {\textbf{\underline{HTTPS}}};
                \node at ($(A.north west)+(0.4,0.7)$) (hpke) {\textbf{\underline{HPKE}}};
                \draw[black,fill=green!10,very thick,dashed] ($(C.north west)+(-0.1,0.4)$)  rectangle ($(C.south east)+(0.1,0)$);
                \node at ($(C.north west)+(0.6,0.15)$) (tls) {\textbf{\underline{HTTPS}}};
            \end{scope}
        \end{tikzpicture}
    } \caption{Oblivious HTTP flow including usage of hybrid public key
    encryption (HPKE) for message encapsulation. Here, \(\OO\) is the
    \emph{proxy resource}~\cite{I-D.thomson-http-oblivious}. This entity
    can be implemented in \toolfull using the randomness server \OO,
    since the client messages are protected with TLS.}
    \label{fig:oblivious-http}
\end{figure}
\point{Oblivious HTTP} An alternative mechanism known as Oblivious
HTTP (OHTTP) that has been proposed as a draft standard to the
IETF~\cite{I-D.thomson-http-oblivious} performs similar anonymization of
HTTP requests as Tor, but with fewer intermediate hops~---~promising a smaller
performance overhead. The oblivious proxy is a single party known as the
\emph{proxy resource}, and the aggregation server plays the part of a
\emph{target resource} that receives the client message~\cite{I-D.thomson-http-oblivious}.

Figure~\ref{fig:oblivious-http} provides a diagrammatic representation
of the OHTTP flow in the context of \tool. In essence, the client
encapsulates a HTTP(S) request containing their message to the aggregation
server using \emph{hybrid public key encryption}
(HPKE)~\cite{I-D.irtf-cfrg-hpke}, where encapsulation is performed under
the public key of the oblivious proxy. The client sends this
encapsulated message as the body of a separate HTTP(S) request to the
oblivious proxy. The proxy decapsulates the message and forwards it on
to the aggregation server, without including any client identifying
information.  

Since client messages to the aggregation server are protected by TLS, the
oblivious proxy has no way of reading the client messages. As a result,
this oblivious proxy can be instantiated using the existing randomness
server \OO in \toolfull without compromising any of the security goals,
and without requiring any additional non-colluding parties. Note that
this means that the randomness server must explicitly send a message to
the aggregation server, whereas the original \tool protocol requires no
communication between these two entities. This communication is minimal
and not related at all to the cryptographic logic that is run in the
aggregation server. Even so, operators that prefer to avoid any
communication taking place between these servers can simply submit data
over existing anonymizing proxies like Tor, or can require the
oblivious proxy to be run by a different party.

The Oblivious HTTP Internet standards draft defines specific guarantees
that must be upheld by the anonymizing proxy, as well as request
formats~\cite[Appendix A]{I-D.thomson-http-oblivious}. Such proxies are
already intended to be standardized by the IETF, and to be run by independent
entities\footnote{IETF OHAI:
\url{https://datatracker.ietf.org/group/ohai/about/}} for
privacy-preserving measurement aggregation systems~\cite{ietf-ppm}.

\subsection{Formal Security Model}
\label{sec:security-model}
We now provide the security model for establishing the security of \toolfull. See Appendix~\ref{app:cryptographic-guarantees} for a sequence of theorems and proofs that guarantee the security and correctness of the \tool protocol, with respect to the following model.

\point{Ideal functionality}
The ideal functionality below represents the inputs, outputs, and internal steps of the threshold aggregation functionality. We will write \(\F_{\P}\) to denote this functionality, where \P is the \tool protocol.
\begin{itemize}
    \item Participants: aggregation server \SS, randomness server \OO, clients \(\{\CC_i\}_{i \in [n]}\).
    \item Public parameters: upper bound on \(n\).
    \item Functionality:
    \begin{itemize}
        \item \OO inputs the VOPRF parameters \pparams, and keypair \((\msk,\mpk)\).
        \item Each client \(\CC_i\) (\(i \in [n]\)) provides their input \((x_i,\aux_i)\).
        \item Let \(\E_\iota = \left\{(x_\iota,\{\aux_j\}_{j \in J},\threshold_\iota)\ :\ (J \subseteq [n])\,\wedge\,(x_j = x_\iota) \right\}\) for each unique \(x_\iota\) received, where \(\threshold_\iota = |\{\aux_j\}|\) is the number of client measurements collected in \(\E_\iota\).
        \item Let \(\Y\) be an empty map.
        \item For each \(\E_\iota\) where \(\threshold_\iota \geq \threshold\), set \(\Y[x_\iota] = \{\aux_j\}_{j \in J}\).
        \item Output \(\Y\) to \(\SS\), output \(\{\F_{\voprfschemeshort}(\pparams,\mpk,\msk,x_i)\}_{i \in [n]}\) to each \(\CC_i\) (receiving the client output) and \OO (receiving the server output).
    \end{itemize}
\end{itemize}
Overall, this ideal functionality captures the fact that the aggregation server learns all client measurements that are sent by at least \(\threshold\) clients. The randomness server learns what it would normally learn during the VOPRF exchange, and each client learns nothing.\footnote{For \toollite the ideal functionality does not include inputs or outputs for the randomness server.}

\point{Leakage function}
We use the leakage function (\leakagenoinput) defined below to account for additional protocol leakage that occurs while running \tool.
Assume that the aggregation server \SS, and some subset \(\T \subset \cdv\) of all clients is controlled by an adversary \(\adv\). The view of \(\adv\) can be simulated using the following leakage function.
\begin{itemize}
    \item Receive \(\W \leftarrow \adv\), a set of disqualified clients specified by \(\adv\).
    \item Receive \(\X_\adv \leftarrow \adv\), a set of input measurements specified by \(\adv\)
    \item Let \(\Q = \cdv \setminus \W\) be the set of remaining honest clients.
    \item Receive \((x_i,\aux_i)\) from each \(\CC_i \in \cdv\).
    \item Partition the set \(\{(x_i,\aux_i)\}_{i \in [|\Q|]} \cup \X_\adv\) into \(\N_1,\ldots,\N_\ell\), where \(\N_\iota\) is the set of all pairs that share the same measurement \(x_\iota\) (for \(\ell\) unique measurements).
    \item Leak \(|\N_\iota|\) to \adv, for each \(\iota \in [\ell]\).
\end{itemize}
We write \(\leakage{\X_\H}\), where \(\X_\H\) is the set of all measurements received from honest clients, to denote the output of \(\leakagenoinput\) on \(\X_\H\). Overall, this leakage function captures the fact that an adversary that controls \SS learns the cardinality of clients that share each unique measurement that is received. 

Note that the leakage function explicitly does not capture the notion of client identity, since we assume that client measurements are submitted anonymously. This can be achieved using various practical solutions (Section~\ref{sec:high-level-privacy}).\footnote{The leakage function could also be trivially updated to capture this additional leakage, if anonymous submission is not possible.}

\section{Functionality and Leakage Comparison}
\label{sec:functionality-comparison}
We compare \tool with prior constructions of private threshold aggregation schemes, specifically with respect to functionality and leakage profiles. See Section~\ref{sec:related-work} for a complete discussion on previous work related to this topic.

\subsection{Ideal functionality}
\label{sec:ideal-comparison}
\begin{figure*}[!t]
  \centering
  \scalebox{0.75}{%
      \begin{tabular}{
        l%
        >{\raggedleft\arraybackslash}p{18mm}%
        >{\raggedleft\arraybackslash}p{18mm}%
        >{\raggedleft\arraybackslash}p{18mm}%
        >{\raggedleft\arraybackslash}p{18mm}%
        >{\raggedleft\arraybackslash}p{18mm}%
        >{\raggedleft\arraybackslash}p{18mm}%
        >{\raggedleft\arraybackslash}p{18mm}%
        >{\raggedleft\arraybackslash}p{18mm}%
      }
          \toprule
          Protocol & Single-round\newline interaction\newline with clients & Bandwidth & Client\newline computation & Aggregation\newline computation & Single-server aggregation & Associated\newline data & Negligible correctness errors & Fail-safety \\
          \midrule
          Proxy-based shuffling~\cite{prochlo,Chaum81,CCS:Neff01} & \goodcell{\cmark} & \goodcell{\(O(n)\)} & \goodcell{\(O(1)\)} & \goodcell{\(O(n)\)}  & \badcell{\xmark} & \goodcell{\cmark} & \goodcell{\cmark} & \badcell{\xmark} \\
          Kissner et al.~\cite{C:KisSon05} & \badcell{\xmark} & \badcell{\(O(mn\secpar)\)} & \badcell{\(O(n^2)\)} & \badcell{\(O(mn\secpar)\)} & \goodcell{\cmark} & \badcell{\xmark} & \goodcell{\cmark} & \badcell{\xmark} \\
          Blanton et al.~\cite{AsiaCCS:BlaAgu12} & \goodcell{\cmark} & \badcell{\(O(mn\secpar)\)} & \badcell{\(O(n^2)\)} & \badcell{\(O(mn^2\secpar)\)} & \badcell{\xmark} & \badcell{\xmark} & \goodcell{\cmark} & \badcell{\xmark} \\
          Randomized response~\cite{JMLR:BNST20,STOC:BasSmi15,ACMTA:BNS19,CCS:QYYKXR16,PMLR:ZKMSL20} & \badcell{\xmark} & \goodcell{\(O(n\secpar)\)} & \goodcell{\(O(1)\)} & \goodcell{\(O(n)\)} & \goodcell{\cmark} & \badcell{\xmark} & \badcell{\xmark} & \badcell{\cmark} \\
          Boneh et al.~\cite{SP:BBCGI21} & \goodcell{\cmark} & \badcell{\(O(mn\secpar)\)} & \goodcell{\(O(\secpar)\)} & \avgcell{\(O(mn\secpar\kappa)\)} & \badcell{\xmark} & \badcell{\xmark} & \goodcell{\cmark} & \badcell{\xmark} \\
          \tool (Section~\ref{sec:scheme}) & \goodcell{\cmark} & \goodcell{\(O(n\secpar)\)} & \goodcell{\(O(\secpar)\)} & \goodcell{\(O(n\secpar\threshold^2)\)} & \goodcell{\cmark} & \goodcell{\cmark} & \goodcell{\cmark} & \goodcell{\cmark} \\\bottomrule
      \end{tabular}
      } \caption{Coarse-grained comparison of \tool against previous
  work. We use \(\secpar\) to denote the security parameter, \(n =
  |\cdv|\) to denote the number of clients, and \(m\) to denote the
  number of servers that are used in multi-server settings. Note that we ignore generic MPC techniques for computing
  threshold aggregation due to well-established performance
  limitations~\cite{CCS:DoeShe17}. We also do not include Prio-like protocols~\cite{NSDI:CorBon17,EPRINT:AGJO21} as they are not compatible with string-based data.}
  \label{table:comparison}
\end{figure*}

A coarse-grained comparison of the functionality provided in \tool with previous approaches is given in Figure~\ref{table:comparison}. All performance costs are
asymptotic, see Section~\ref{sec:experimental} for the concrete costs of
running \tool{}. Overall, the solutions that offer the closest functionality, while still retaining close to practical performance, are the private heavy hitters protocols of~\cite{SP:BBCGI21,ACMTA:BNS19,PMLR:ZKMSL20,JMLR:BNST20,STOC:BasSmi15,CCS:QYYKXR16}. Protocols based on MPC involve very complex cryptographic implementations and expensive overheads~\cite{CCS:DoeShe17}. Protocols that utilize trusted proxies and hardware require clients to place trust in computing platforms and entities that are not immune to security failures~\cite{arxiv:NilBidBro20}.

\subsection{Leakage}
\label{sec:leakage-comparison}
An optimal solution to the threshold aggregation problem would provide information that can be derived from the output of the ideal functionality alone. In other words, only those measurements that are received from \(\threshold\) clients. While some schemes are able to achieve this notion~\cite{prochlo,NSDI:CorBon17,C:KisSon05,AsiaCCS:BlaAgu12}, they typically fall short of providing practical solutions. 

Recent approaches for efficiently learning \(\threshold\)-heavy-hitters~\cite{SP:BBCGI21,ACMTA:BNS19,PMLR:ZKMSL20,JMLR:BNST20,STOC:BasSmi15,CCS:QYYKXR16} incorporate some amount of leakage, that provides additional information to the adversary. Specifically, each scheme leaks all the \(\threshold\)-heavy-hitting prefixes of the eventual \(\threshold\)-heavy-hitter measurements. As an example, consider clients that sent a measurement corresponding to their birth country. Assume that \(\threshold=4\), and that five clients send \texttt{"United States of America"}, four send \texttt{"United Kingdom"}, and three send \texttt{"United Arab Emirates"}. Then the ideal functionality suggests that the aggregation server should only learn that five clients sent \texttt{"United States of America"}, and four sent \texttt{"United Kingdom"}. However, additional leakage informs the server that twelve clients sent the prefix \texttt{"United"}. While such leakage may not always be useful, in this example this effectively leaks how many clients also sent the answer \texttt{"United Arab Emirates"} (since no other country begins with the \texttt{"United"} prefix).

While \tool avoids prefix-based leakage, it leaks the subsets of clients that share equivalent measurements. In other words, the server can separate client messages into groups that all share the same measurement. This can be especially damaging in situations where the adversary launches a ``Sybil'' attack and injects their own measurements to try and learn how many times the same measurement is submitted. As mentioned previously, ``Sybil'' attacks are ultimately possible against any threshold aggregation scheme (even those that do not permit any leakage), and so this is not unique to \tool. Separately, such leakage could allow for measurement inference-based attacks that utilize the counts of each received message to attempt to infer encoded measurements.

Finally, it should be noted that the single-server aggregation mechanisms of \tool and those based on randomized response~\cite{JMLR:BNST20,STOC:BasSmi15,ACMTA:BNS19,CCS:QYYKXR16,PMLR:ZKMSL20} naturally allow linking client messages to revealed measurements. Such leakage can be eliminated using anonymizing proxies for submitting client messages (Section~\ref{sec:high-level-privacy}). This approach has already been recommended for submitting measurements as part of ongoing standardization work in this area~\cite{ietf-ppm}.

\section{Performance Evaluation}
\label{sec:experimental}
We provide an open-source Rust implementation of all the necessary
components for establishing the performance of \tool.\repourl We benchmark the runtimes for both
constructing client messages, and running the server aggregation
process. We estimate the overall bandwidth costs as a result of client's
interacting with both the aggregation and randomness servers. Finally,
we provide runtimes and communication costs for performing anonymization
of \tool{} messages via the Oblivious HTTP
framework~\cite{I-D.thomson-http-oblivious}. Overall, \tool is exceptionally efficient, even when processing 1 million measurements, and orders of magnitude cheaper than competing approaches.

\subsection{Implementation Details}
\point{Secret-sharing implementation}
Our secret sharing implementation is based on the Adept Secret Sharing (ADSS)
framework developed by Bellare et al.~\cite{PoPETS:BelDaiRog20} for
achieving stronger guarantees on privacy and authenticity of shares.

As noted previously, we require implementation of a prime-order finite field
for secret sharing that is large enough to make the occurrence of
collisions a low probability event to ensure correctness. We choose two
prime-order fields~---~one with a modulus of~255 bits in length
(\(\FF_{255}\)), and one that is~129-bits (\(\FF_{129}\))~---~and
provide performance for both. In a practical sense, we consider the
change of collisions in either field to be negligible. We assume that
all inputs that are shared are \(16\) bytes in length (randomness for
deriving symmetric encryption keys), so that they can be stored in a
single share polynomial for either choice of finite field.

Finally, we note that secret share recovery uses only a subset of of
 \(\threshold\) shares. For example, if we receive~\(200\) shares for a
 given measurement, with \(\threshold = 100\), we will only perform
 recovery using a subset of \(100\) shares. This means that we do not
 check whether all client shares are well-formed, but we do perform
 checks on the decrypted result for \emph{all} of them.

\point{Oblivious HTTP proxy} We use an open-source Rust implementation
for constructing an Oblivious HTTP
proxy\footnote{\url{https://github.com/martinthomson/ohttp}} that is
compliant with the most recent IETF standards
draft~\cite{I-D.thomson-http-oblivious}, as described in
Section~\ref{sec:high-level-privacy}. Our setup assumes that client
messages are sent via a \emph{proxy resource}, run by \(\OO\), to a
\emph{target resource}, run by
\(\SS\)~\cite{I-D.thomson-http-oblivious}. Note that sending such messages via \(\OO\) is compatible with our approach since such messages are encrypted over a TLS connection that is negotiated between each client and \SS. This ensures that we do not introduce any additional trust assumptions to the \tool protocol. Encapsulation and
decapsulation are performed using HPKE, with ciphersuite
\texttt{DHKEM(X25519,HKDF-SHA256)}~\cite{I-D.irtf-cfrg-hpke}.

\point{Other cryptographic machinery}
We implement the \voprf construction detailed by Tyagi et al.~\cite{EPRINT:TCRSTW21}, with~128 bit security. The \voprf is implemented using the
ristretto255 prime-order group  
abstraction.\footnote{\url{https://github.com/dalek-cryptography/curve25519-dalek}}
All hash functions are implemented using SHA-256. All symmetric
encryption is implemented using AES-GCM AEAD with 128-bit keys.

\point{Client measurement sampling} All client inputs are sampled as
256-bit strings from a Zipf power-law distribution with a support of
\(N=10,000\) and parameter \(s = 1.03\). This matches the experimental
choices made in~\cite{SP:BBCGI21}, and captures a large proportion of
applications. This distribution occurs naturally in many network-based
settings~\cite{KleLaw2001} and, as highlighted in~\cite{SP:BBCGI21}, the
chosen parameters are chosen conservatively in that the distribution is
closer to uniform than would typically be expected. In addition, we
measure the costs of \tool in the two cases where clients append either
zero or 256 bytes of auxiliary data to the measurement that they send.

\point{Benchmarking} All benchmarks are run using an AWS EC2
\texttt{c4.8xlarge} instance with~36 vCPUs (3.0 GHz Intel Scalable
Processor) and~60 GiB of memory.

\subsection{Communication Costs}

\begin{figure}[!t]
    \centering
    \includegraphics[scale=0.41]{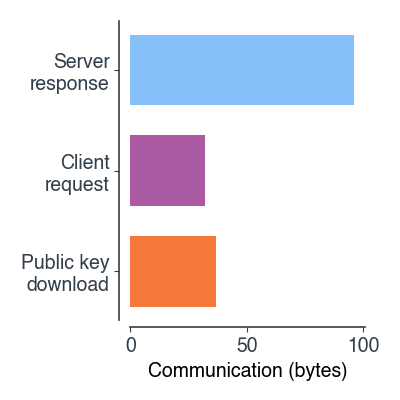}
    \includegraphics[scale=0.41]{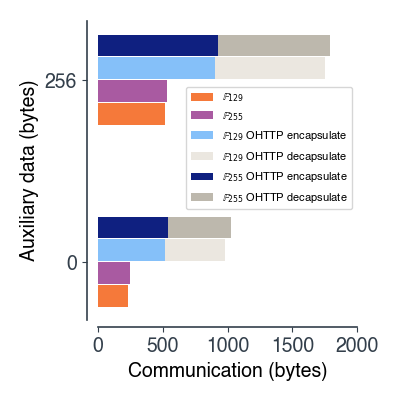}
    \caption{
        \textbf{Left}: Communication with the randomness server during
        the randomness sampling phase of \toolfull.\\
        \textbf{Right}: Communication during the \tool aggregation
        phase. Performance is compared for the two fields
        \(\{\FF_{129},\FF_{255}\}\) used in secret sharing, depending on
        whether OHTTP is utilized, and depending on how much auxiliary
        data is sent (either 0 or 256 bytes) with each measurement.}
    \label{fig:aggregation-communication}
\end{figure}

\point{Randomness server} In \toolfull, the client must request
randomness from the randomness server, which amounts to requesting a
\voprf evaluation on their measurement. We assume that client measurements are  sent daily for seven days (allowing for a daily epoch key rotation of the randomness server key material).
This means that the client must download eight compressed curve points
for the server public key across the seven-day period: amounting to an
amortized download cost of \((8/7)\cdot\texttt{compressed\_ec\_point\_len}\)
bytes per day. The size of a client request is a single compressed
elliptic curve point, and the response is a single curve point, plus two
field scalars for the DLEQ proof. The total amortized per-client
communication costs are given in
Figure~\ref{fig:aggregation-communication}.

\point{Aggregation server} The raw communication costs between clients
and the aggregation server consist of a single encrypted
ciphertext, a secret share, and a 32-byte tag. The size of the share
is dependent on the size of the field that is used. The size of the
ciphertext is dependent on the size of the auxiliary data that is
appended to the client measurement. If client measurements are sent via
the OHTTP proxy, then there are two HTTP requests: one containing an
encapsulated HTTP request to the \emph{proxy resource} and another
corresponding to the decapsulated request to the aggregation server. We
provide per-client communication costs in
Figure~\ref{fig:aggregation-communication}. Note that for constructing
OHTTP requests, we use an encapsulated HTTP request containing the
following information:
\begin{itemize}
    \item HTTP status line: e.g. \texttt{GET /hello.txt HTTP/1.1};
    \item \texttt{User-Agent}, \texttt{Host}, and
    \texttt{Accept-Language} HTTP headers with default values given for each;
    \item \texttt{X-STAR-Message} header containing a base64-encoded \tool
    protocol message (Section~\ref{sec:scheme}).
\end{itemize}

\subsection{Computational Costs}

\begin{figure}[t]
    \centering
    \scalebox{0.9}{\begin{tabular}{
        >{\raggedleft\arraybackslash}p{18mm}%
        >{\raggedleft\arraybackslash}p{18mm}%
        >{\raggedleft\arraybackslash}p{18mm}%
        >{\raggedleft\arraybackslash}p{9mm}%
        >{\raggedleft\arraybackslash}p{9mm}%
    }
        \toprule
        \voprf blind & \voprf final & \voprf verification &
        \multicolumn{2}{p{18mm}}{Aggregation
        message}\\\cmidrule(lr){4-5} & & & \(\FF_{129}\) &
        \(\FF_{255}\)\\\midrule \(0.081\) & \(0.093\) & \(0.301\) & \(0.019\) & \(0.02\)
        \\\bottomrule \end{tabular}} 
        \caption{
        Client runtimes (ms) during the \tool protocol.
        } 
    \label{table:client-computation}
\end{figure}

\begin{figure}[t]
    \centering
    \scalebox{1}{\begin{tabular}{
        >{\raggedleft\arraybackslash}p{18mm}
        >{\raggedleft\arraybackslash}p{18mm}
        >{\raggedleft\arraybackslash}p{18mm}}
        \toprule
        \voprf setup & \voprf evaluation & Proof\newline generation \\\midrule
        \(0.547\) & \(0.662\) & \(0.166\) \\\bottomrule
    \end{tabular}}
    \caption{Randomness server single-threaded runtimes (ms).} 
    \label{table:randomness-server-computation}
\end{figure}

\point{Client message construction}
In Figure~\ref{table:client-computation}, we summarize the various costs
of the cryptographic operations required for each individual client in
\toolfull. Client-side operations are highly efficient: the most expensive are the computation of
two exponentiations in the elliptic curve group that is used. Therefore,
we can reasonably expect that the \tool protocol can be leveraged even
for clients with severely limited computation boundaries. The runtimes
of the randomness server in \toolfull are given in
Figure~\ref{table:randomness-server-computation}.

\begin{figure*}[!t]
    \centering
    \includegraphics[scale=0.4]{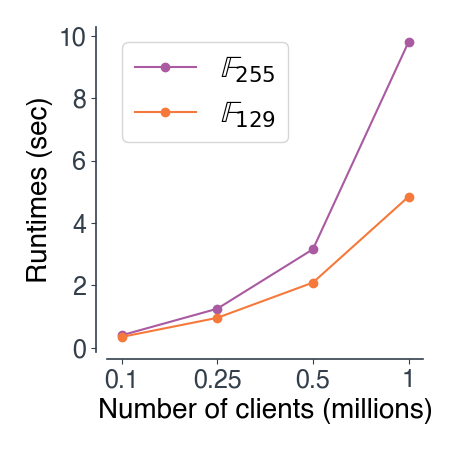}
    \includegraphics[scale=0.4]{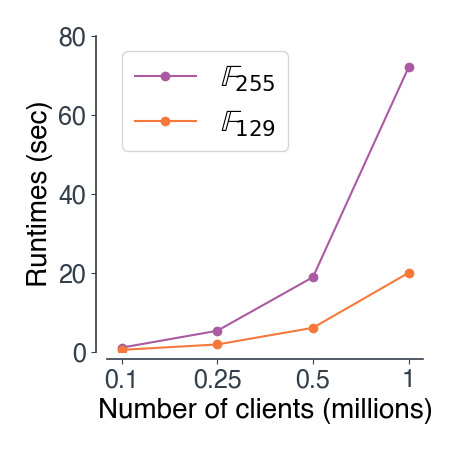}
    \includegraphics[scale=0.4]{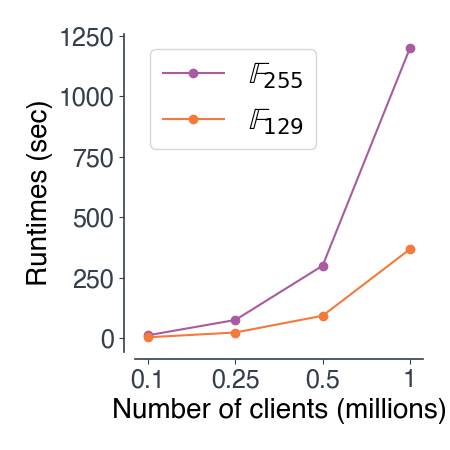}
    \caption{Aggregation server computation runtimes (seconds) based on
        number of clients. Graphs from left to right, corresponding to a
        threshold \(\threshold \in \{0.01\%,0.1\%,1\%\}\) of total
        number of client inputs. Performance is compared for both fields
        \(\{\FF_{129},\FF_{255}\}\).}
    \label{table:server-computation}
\end{figure*}

\point{Aggregation server}
Figure~\ref{table:server-computation} considers the cost of the entire
server aggregation phase for up to~\(1,000,000\) clients, with
\(\threshold\) taken from \(\{0.01\%,0.1\%,1\%\}\) of this number. For
\(1,000,000\) clients with \(\threshold = 0.1\%\), the runtime of the
aggregation server is only~\(20.01s\) using \(\FF_{129}\),
and~\(73.65s\) for \(\FF_{255}\). This clearly indicates that the \tool
protocol is suitable for processing aggregations on very regular
(sub-daily) reporting schedules. Generally, when reducing the underlying
field size (\(\FF_{129}\)) we see runtimes reduce by a factor of
around~\(3\times\). Note that the absolute size of the threshold has a
noticeable impact on the runtime performance, due to the quadratic
overhead of running Lagrange interpolation. This leads to quadratic
growth of runtimes with respect to the threshold.

\begin{figure}[!t]
    \centering
    \centering
    \scalebox{0.87}{\begin{tabular}{
        >{\raggedleft\arraybackslash}p{20mm}%
        >{\raggedleft\arraybackslash}p{20mm}%
        >{\raggedleft\arraybackslash}p{20mm}%
        >{\raggedleft\arraybackslash}p{20mm}%
    }
        \toprule
        Client setup & Server setup & Encapsulate & Decapsulate\\\midrule
        \(0.131\) & \(0.106\) & \(0.002\) & \(0.002\) \\\bottomrule
    \end{tabular}}
    \caption{
        Runtimes (ms) for performing single-threaded HPKE setup,
        encapsulation, and decapsulation at the OHTTP proxy, using the
        \texttt{DHKEM(X25519,~HKDF-SHA256)} ciphersuite.
    }
    \label{table:ohttp-runtimes}
\end{figure}

\point{Oblivious HTTP proxy} Finally, we provide benchmarks in
Figure~\ref{table:ohttp-runtimes} for running HPKE encapsulation and
decapsulation of client messages by the OHTTP proxy. The OHTTP proxy is
required for eliminating client identity leakage.

\subsection{Comparison With Prior Approaches}
We compare \tool directly with the performance
results of the work of Boneh et al.~\cite{SP:BBCGI21}, that devises a
private heavy-hitters protocols from distributed point functions~\cite{EC:GilIsh14}. As shown in Figure~\ref{table:comparison} and mentioned previously, alternative approaches (such as those based on randomised response, MPC, and shuffling) do not provide satisfactory performance or functionality.

To ensure that the leakage profile is similar in both \tool
and~\cite{SP:BBCGI21}, we compare \toolfull performance whilst including
overheads for running the OHTTP proxy. From a functionality perspective,
the protocol of~\cite{SP:BBCGI21} does not allow clients to specify
auxiliary associated data, and thus is not as expressive as the \tool
protocol. For this reason we only consider communication costs
when auxiliary data is not sent. Moreover, \tool requires only a single
aggregation server, while their aggregation phase requires two server
instances. Finally, the client input distribution parameters are
identical. 

\point{Communication}
\toolfull (using \(\FF_{129}\)) requires: \(133\) bytes of
public key data to be downloaded by the client from the randomness
server; \(32\) bytes to be sent by the client to the randomness server;
\(983\) bytes to be sent from the client to the aggregation server (via
the OHTTP proxy), of which only \(464\) bytes is received by the
aggregate server, and \(519\) bytes is received by the OHTTP proxy. This
gives a total \(1148\) bytes per client. The protocol
of~\cite{SP:BBCGI21} requires approximately \(70KB\) of communication
per client. Therefore, overall communication in \toolfull
is~\(\bm{62.4\times}\) \textbf{smaller} than in~\cite{SP:BBCGI21}. Using
\(\FF_{255}\) instead, per-client communication in \toolfull only
increases by 20 bytes.

\point{Runtimes} \toolfull vastly improves on the runtimes
of~\cite{SP:BBCGI21}~---~using \(\FF_{129}\) as the base secret sharing
field, and \(\threshold = 0.1\%\) (the same value used by~\cite{SP:BBCGI21}) of
\(100,000\) clients, \toolfull performs server-side aggregation in
\(0.467\) seconds (and \(1.03\) seconds using \(\FF_{255}\)). Moreover,
times scale reasonably: for~\(500,000\) clients, \toolfull performs
server-side aggregation in~\(6.06s\); for~\(1\) million clients, it
takes \(20s\).\footnote{Using \(\FF_{255}\),
aggregations of data from \(500,000\) and \(1\) million clients take \(18.9s\) and \(72.1s\), respectively.} In contrast, the~\cite{SP:BBCGI21} protocol takes
\(828.1s\) to perform an aggregation of data from \(100,000\) clients,
and \(54\) minutes for~\(400,000\) clients. Thus, the aggregation phase
is \(\bm{1773\times}\) \textbf{faster} in the \toolfull
protocol.

Clients messages take \(0.628ms\) to construct,
including interactions with the randomness server and HPKE
encapsulation. The randomness server operations take \(0.828ms\) per
client input; setup costs occur once and can thus be amortized
across all client messages. The cost of
running the HPKE proxy is \(0.002ms\) per client input. These
times can be distributed across the randomness server key epoch, and requests can be
answered in parallel.

\begin{figure*}[t]
    \centering
    \scalebox{0.8}{\begin{tabular}{%
        >{\Large}l
        >{\Large\raggedleft\arraybackslash}p{39mm}
        >{\Large\raggedleft\arraybackslash}p{35mm}
        >{\Large\raggedleft\arraybackslash}p{35mm}
        >{\Large\raggedleft\arraybackslash}p{35mm}
    }
        \toprule
        Cost & Boneh et al.~\cite{SP:BBCGI21} & \multicolumn{3}{>{\Large\centering}p{105mm}}{\toolfull}\\\cmidrule(lr){2-2}\cmidrule(lr){3-5}
        & Aggregation & Aggregation & \voprf & OHTTP proxy\\\midrule
        { Comms in} & { \(\$0.6193\)} & { \(\$0.00389\)} & { \(\$0.00027\)} & { \(\$0.00435\)} \\
        { Comms out} & { \(\$0.13\)} & { ---} & { \(\$0.00017\)} & { \(\$0.00086\)} \\
        { Computation} & { \(\$0.3659\)} & { \(\$0.0002\)} & { \(\$0.03659\)} & { \(\$0.00009\)} \\\midrule
        { \textbf{Total cost}} & { \(\bm{\$1.1152}\)} & { \(\bm{\$0.00409}\)} & { \(\bm{\$0.03703}\)} & { \(\bm{\$0.0053}\)} \\\bottomrule
    \end{tabular}}

    \caption{Monetary costs associated with running both \toolfull
        and~\cite{SP:BBCGI21}, for aggregating \(100,000\) client
        measurements. All costs include communication from clients and,
        in the case of~\cite{SP:BBCGI21}, communication between
        aggregation servers. Costs for \toolfull include the additional
        costs associated with running the randomness server and OHTTP
        \emph{proxy resource}. All costs are derived from Amazon EC2
        \texttt{c4.8xlarge} costs at time of writing (February 2022). } 
    \label{table:monetary-costs}
\end{figure*}

\point{Financial costs} Finally, taking the costs of running an AWS EC2
\texttt{c4.8xlarge} at the time of writing, it costs \(\$1.591\) per
hour of runtime, plus \(\$0.09\) per GB of data transferred out, and
\(\$0.02\) per GB of data transferred in.\footnote{February 2022} We summarize the monetary
costs for both protocols in Figure~\ref{table:monetary-costs}.
Communication costs are calculated by considering all data transferred
in and out of EC2 instances, and computation costs by considering
computation per hour.\footnote{In~\cite{SP:BBCGI21}, computational is doubled due to the two-server setup.} For \(100,000\) clients, total costs
of running all the components in \toolfull are \(\bm{\$0.00409} +
\bm{\$0.037} + \bm{\$0.0053} = \bm{\$0.04639}\), which is more than
\(\bm{24\times}\) cheaper than the cost of running the Boneh et
al.~\cite{SP:BBCGI21} protocol (\(\bm{\$1.1152}\)). Notice that \tool
remains cheaper than this benchmark even when aggregating data from
\(1,000,000\) clients, costing \(\bm{\$0.4727}\) to run. Since the
monetary costs of running~\cite{SP:BBCGI21} are expected to scale
similarly linearly, we expect that \tool will remain significantly
cheaper beyond \(1,000,000\) clients as well.\footnote{Dominant
financial costs for \tool relate to bandwidth usage, which scale
linearly, rather than aggregation computation time.}

\section{Discussion}
\label{sec:discussion}

\subsection{\texorpdfstring{Candidate Input Distributions for \toollite}{Entropic Input Distributions for \toollite}}
\label{sec:starlite-dists}
The \toollite protocol must only be used when client inputs
that are \emph{not} eventually revealed are sufficiently entropic;
client inputs that \emph{are} revealed can be drawn from predictable
distributions (Appendix~\ref{app:starlite-sec}). Large
\emph{heavy-tailed} distributions, with correspondingly small thresholds
that ensures the distribution tail has sufficient
min-entropy, appear suitable for ensuring enumeration
attacks are difficult.

It was noted
in~\cite{prochlo} that full URLs form a large, unpredictable search
space. Other wide distributions include the IPv6 address space, which is 64
bits long, and if clients are submitting their own IP addresses these
are likely to be unpredictable and not shared by other clients. Finally,
\tool allows for multiple messages, sampled from independent
distributions, to be concatenated together into a single message. Concatenating enough independently distributed messages can lead
to a distribution that derives enough entropy from each of
the underlying distributions to construct a secure client message.
Finally, the ability of an aggregation server to perform local attacks
can be restricted by using deliberately slower cryptographic algorithms,
as discussed in Section~\ref{sec:scheme}.

We reemphasize that extreme care should be taken when using \toollite,
since making categorical arguments about the entropy present in a
real-world input distribution is very difficult. In most cases, using
\tool is the safest option and comes with very small additional overheads.

\subsection{Limitations}
\label{sec:limitations}
One limitation of \tool is that identity leakage can only be eliminated using
application-layer solutions that anonymize client messages to the
aggregation server (i.e. via an anonymizing proxy). However, note that
some applications (such as those that involve checking for client-side
fraud) may not want to elide such leakage, and thus \tool maintains
flexibility. A further limitation is that \tool cannot provide security
for small message spaces, since this would allow a malicious aggregation
server to enumerate all possible client inputs before it has received
them, via interaction with the randomness server. This limitation is
also possible to exploit in prior systems but with attack complexity
equal to \(n\cdot\threshold\), rather than \(n\) in \tool. Finally, as
is the case for all threshold aggregation systems, \tool remains
vulnerable to Sybil attacks. Preventing such attacks is out-of-scope for
this work, beyond showing that \tool is robust against adversarial
clients to the extent that their only power is in choosing arbitrary
inputs (Theorem~\ref{thm:robustness}, Appendix~\ref{app:cryptographic-guarantees}).

\section{Related Work}
\label{sec:related-work}
We summarize a number of prior approaches that aim to preserve client privacy during threshold aggregation.

\point{Data shuffling} Systems such as Prochlo~\cite{prochlo} construct
a data pipeline for clients to provide measurements whilst maintaining
crowd-based privacy. Clients send their data to an initial server that
strips identifying information and collates measurements into
groupings.\footnote{This process is compatible with adding differential privacy.} Once groupings are large enough, the data is
shuffled and sent to a processing server that can perform general
post-processing. Unfortunately, these pipelines rely on honest execution
of each of the pipeline steps by non-colluding servers, or by trusted
hardware and software enclaves. 
Similar approaches using mix-nets~\cite{Chaum81} and
verifiable shuffling~\cite{CCS:Neff01} provide better security
guarantees, but require increased interactivity to ensure privacy for
thresholds greater than one.

\point{Generic multi-party computation} Generic multi-party computation
(MPC) protocols can be leveraged to
compute threshold aggregation functionality over data from multiple clients~\cite{C:KisSon05,AsiaCCS:BlaAgu12}. In this context, the server
only learns those values which are shared with it over \(\threshold\)
times. Such protocols can be computed directly between clients and
servers using generic two-party computation that ensures malicious
security. Some proposals focus on performing oblivious RAM computations
during client-server
interactions~\cite{FOCS:GarLuOst15,CCS:GKKKMR12,EC:KelYan18,TCC:LuOst13}.
Unfortunately, such protocols remain impractically expensive
for real-world systems~\cite{CCS:DoeShe17}. Moreover, such
schemes require heavily-involved implementations
for instituting the online (and multi-round) communication
and computation patterns.

\point{Outsourced computation} 
Private heavy-hitters protocols aim to provide efficient threshold aggregation
functionality for all data types. A promising,
recent construction explored by Boneh et al.~\cite{SP:BBCGI21} requires
clients to secret-share or \emph{distribute} a point function
(evaluating to \(1\) on their chosen value, and \(0\) elsewhere) between
two aggregation servers~\cite{EC:GilIsh14}. These servers then combine shares of multiple
point functions obliviously and reveal the heavy-hitters among the
client values. Overall, for \(400,000\) clients each holding a 256-bit
string, it takes the two servers \(54\) minutes to compute the
\(\threshold\)-heavy-hitters (where \(\threshold = 0.1\%\) of all
clients) in the dataset, requiring \(70KB\) total communication per
client. This approach leaks all heavy-hitting prefixes,
and more generally all information leaked by the multi-set of honest
client inputs. This information can be restricted by using local
differential privacy.

Outsourcing of said computations had been explored previously in
using \(> 2\) servers, which then interact with
each other to compute the eventual
output~\cite{AsiaCCS:BlaAgu12,C:KisSon05}. Such constructions lead to
computation complexities that are quadratic in the number of
client inputs, and require usage of notably heavier cryptographic
primitives. While more efficient approaches do exist, such as
Prio~\cite{NSDI:CorBon17,EPRINT:AGJO21}, they only allow numerical
inputs, and still incur overheads that are infeasible for
building efficient threshold aggregation systems~\cite{SP:BBCGI21}.

\point{Single-server frameworks for private heavy-hitters}
Randomized response based on local differential privacy (LDP) can
provide private heavy-hitter aggregation that is computed only by a
single
server~\cite{JMLR:BNST20,STOC:BasSmi15,ACMTA:BNS19,CCS:QYYKXR16,PMLR:ZKMSL20}.
The major downside of these approaches is that they do not provide
satisfactory correctness guarantees in all situations (a non-negligible
amount of errors may occur). In particular, when the number of clients
is anything but very large, then the amount of noise introduced is
likely to heavily skew the correctness of the aggregation.\footnote{Conversely, when
the number of clients is very large, the noise that is introduced will be relatively small in comparison to the signal.} Furthermore, since the utility of the system is highly
dependent on the privacy parameter and the number of clients, a system
built upon randomized response requires each operator to make informed
decisions about whether the correctness signal is strong enough for
their application. In addition, solutions based on randomized response
leak a non-negligible amount of information about each client's private
value, since they also include prefix-based leakage similar to the protocol of~\cite{SP:BBCGI21}. We prefer to focus on building a
system that provides perfect correctness and concrete security
guarantees, without having to force implementers to establish security parameterizations or consider correctness guarantees themselves.

Generic approaches for achieving randomized response, such as
systems like RAPPOR~\cite{CCS:ErlPihKor14}, require clients to send a
number of bits that is similar in size to the entire universe of
possible input measurements. As a result, such techniques are infeasible
for situations where this universe is very large.

\point{Secret sharing of client data}
\label{sec:apple}
The \tool construction has similar properties to parts of the secret
sharing approach used by Apple, in their concurrent work to prevent the
spread of Child Sexual Abuse Material (CSAM) on Apple
devices~\cite{AppleCSAM}. Similarities appear in the manner that clients
construct messages to the aggregation server --- using a secret sharing
approach to share media from each of their devices. However, the Apple approach does not extend to a distributed setting,
and only operates across a single client's shares. Our work tackles the
broader question of how clients can non-interactively agree on
compatible secret shares in a distributed system, allowing recovery of
messages that are shared by a threshold number of clients. The wider
system and application are also significantly different.

\section{Conclusion}
\label{sec:conclusion}

In this work we build \tool: a simple, practical mechanism for threshold
aggregation of client measurements. We intend \tool to enable privacy-protecting,
user-respecting data collection practices that were not practical
or affordable given the existing state-of-the-art. \tool{} is orders of magnitude cheaper,
easier to understand, and easier to implement (in terms of code and trust requirements) than
existing systems. We provide a tested, open source implementation of \tool{}\repourl{}
in rust that can be used in projects today. 
We hope that \tool{} will result in analytics and usage data collection being more private, for more users, and benefiting more analytics frameworks.


\includefull{
    \section*{Acknowledgements}
    The authors would like to thank Eric Rescorla, Subodh Iyengar, Ananth Raghunathan, and anonymous reviewers for their helpful feedback on this work.
}{}
    
\bibliographystyle{ACM-Reference-Format}
\bibliography{local,abbrev3,crypto}

\appendix

\section{Cryptographic guarantees}
\label{app:cryptographic-guarantees}
In the following section, we will assume the presence of each of the cryptographic primitives specified in Section~\ref{sec:crypto}. We will use the notation that was detailed in Section~\ref{sec:notation}. The following establishes the correctness and security of the \tool protocol (denoted \P) and the \toollite protocol (denoted \(\widetilde{\P}\)), with respect to the ideal functionality laid out in Section~\ref{sec:security-model}.

\subsection{Correctness}
\label{app:proof-correctness}
We first state the correctness guarantee of the \tool protocol.
\begin{theorem}{\emph{(Correctness)}}
    The protocol \(\P\) (similarly \(\widetilde{\P}\)) is correct with all but negligible probability.
    \label{thm:correctness}
\end{theorem}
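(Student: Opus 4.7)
The plan is to reduce correctness to (a) the pseudorandomness/determinism of the VOPRF output, (b) the correctness of Shamir-style secret sharing, (c) the correctness of $\skeschemeshort$, and (d) a union bound over a small collection of low-probability ``bad events'' arising from collisions in random values. I would first describe the output that the ideal functionality produces on inputs $\{(x_i,\aux_i)\}_{i \in [n]}$, namely the map $\Y$ sending each $x_\iota$ appearing $\ge \threshold$ times to its list of $\aux$ values, and then show that, except with negligible probability, the aggregation phase of $\P$ produces exactly this map.

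First, I would invoke the functionality of $\voprfschemeshort$ (via $\F_{\voprfschemeshort}$): for any two honest clients $\CC_i, \CC_j$ with $x_i = x_j$ we have $r_i = r_j$, and hence $r_{i,1}=r_{j,1}$, $r_{i,2}=r_{j,2}$, $r_{i,3}=r_{j,3}$, and $K_i=K_j$. Consequently, clients sharing a measurement derive identical symmetric keys, identical tags $t_i$, and (by the remark in Section~\ref{sec:adss}) identical Shamir polynomials $P$ whose free coefficient is $r_{i,1}$. The shares $s_i$ they produce therefore all lie on the same polynomial $P$, so whenever $\ge \threshold$ such shares are aggregated, $\adssschemeshort{\threshold}{n}.\revealnoinput$ recovers $r_{\iota,1}$ correctly, yielding the same $K_\iota$ as used for encryption, and $\skeschemeshort.\decnoinput$ then returns $x_\iota\|\aux_j$ for every ciphertext in the group by correctness of $\skescheme$.

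Next, I would identify the bad events that could cause the protocol to deviate from the ideal output, and bound each by a negligible quantity in the security parameter. The events are: (i) two measurements $x_\iota \ne x_{\iota'}$ yielding the same tag, i.e.\ $r_{\iota,3} = r_{\iota',3}$, which would cause incorrect grouping --- this is bounded using the PRF pseudorandomness of the VOPRF output and the length $\omega$ of $r_{i,3}$; (ii) collisions between the random polynomial evaluation points sampled independently by clients in a shared group, which would prevent Lagrange interpolation from recovering $r_{\iota,1}$ --- this is bounded by $\binom{n}{2}/p$, which is negligible by the remark in Section~\ref{sec:adss} requiring $p$ to be sufficiently large; and (iii) a spurious decryption, for some $c_j$ in a group, producing a plaintext that parses as $x_\iota\|\aux_j$ with $x_\iota$ mismatched relative to other decryptions, which is ruled out by the correctness of $\skescheme$ on honestly-encrypted ciphertexts and the fact (from event (i)'s absence) that no cross-group mixing occurs. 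A union bound over these events yields the overall negligible correctness error.

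I expect the main subtlety to be event (i): formally, a false grouping can occur not just from outright PRF collisions but also from an adversary deliberately constructing a ciphertext/share pair that, when pooled with honest shares sharing the same $t_i$, still passes the $(x_\iota \ne x_{\iota-1})$ check on line 8 of the aggregation phase. For the correctness statement, however, we consider only honest clients' inputs (adversarial robustness is Theorem~\ref{thm:robustness}), so this reduces to bounding collisions in the $\omega$-bit tag space induced by distinct measurements, which is negligible by VOPRF pseudorandomness. The argument for $\widetilde{\P}$ (\toollite) is identical except that $r_i$ is derived directly from $H(x_i)$ in the random oracle model, and the tag-collision bound is supplied by random-oracle outputs on distinct inputs being independent and uniform, giving the same negligible bound.
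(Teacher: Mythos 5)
Your proposal is correct and follows essentially the same route as the paper's proof: VOPRF determinism gives clients with equal measurements identical keys, tags, and share polynomials, so recovery and decryption succeed for any group of size $\geq \threshold$, with the only failure modes being low-probability collisions. You are in fact somewhat more thorough than the paper, which only explicitly addresses the share-evaluation-point collision in $\FF_p$ and leaves the tag-collision event (your event (i)) implicit, so your union-bound enumeration is a welcome refinement rather than a divergence.
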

\begin{proof}
    The correctness of \tool follows from the fact that \SS recovers a
    symmetric key \(K_{\E_\iota}\) for every subset \(\E_\iota \in \Y\) of
    compatible client shares of size greater than \(\threshold\). In these instances, the server uses
    the value \(t_{\E_\iota}\) to check which shares correspond to each other.
    It then uses the \(\revealnoinput\) procedure to reveal \(r_{{\E_\iota},1}\) and
    derive \(K_{\E_\iota}\). Once it learns \(K_{\E_\iota}\), \SS is able to
    recover \((x_{j},\aux_{j})\) by decrypting each client message corresponding to \(\E_\iota\).
    
    As mentioned in Section~\ref{sec:adss}, this requires that the underlying
    field \(\FF_p\) that secret shares are generated within is created with
    prime order \(p\) large enough. This ensures that randomly sampling shares from this
    field is unlikely to lead to collisions. If a collision
    occurs and the total number of \emph{different} shares is \(<
    \threshold\), then the recovery operation will not succeed.
\end{proof}

\subsection{Security}
\label{app:proof-security}
We prove the security of \tool against a malicious adversary, that is allowed to operate in one of the following manners: corrupting the aggregation server and a set of clients together; corrupting the randomness server and a set of clients together; and corrupting only a set of clients. We show that the \tool protocol maintains client privacy (up to leakage specified by \leakagenoinput) in the case where either server is corrupted. Furthermore, the computation is shown to be robust against an adversary that controls only a set of clients, and attempts to alter the protocol output. The security proofs for \(\P\) are given in Theorems~\ref{thm:privacymalaggserverp2},~\ref{thm:privacymalrandserver}, and~\ref{thm:robustness}. Throughout, we will use \(\F_\P\) to refer to the ideal functionality for the \tool protocol, and \(\F_\voprfschemeshort\) to refer to the ideal functionality for the VOPRF protocol.

\point{Random oracle model usage in \voprf} While the \(\P\)
protocol itself does not include any explicit usage of random oracles,
we require that the internal \voprf protocol uses a random oracle \RO
in the final evaluation of the PRF value. This allows the simulation
to learn adversarial inputs from queries during the protocol execution.
Specifically, we require that the \voprf scheme produces
outputs of the form \(\RO(x,f(\msk,x))\). 
In other words, the ideal functionality of \voprfschemeshort provides an initial output, that the client then finalizes using the random oracle query.
This is actually slightly weaker than a standard VOPRF, but many well-known OPRF primitives adhere to this security model~\cite{EPRINT:TCRSTW21,AC:JarKiaKra14,PoPETS:DGSTV18,C:KLOR20}. 

\point{Security theorems for \tool} We now detail the various theorems that prove
the security of \(\P\).

\begin{theorem}{\emph{(Malicious aggregation
    server)}} The protocol \(\P\) is secure against any
    \(\adv\) that corrupts \SS and some subset \(\cdv_\adv \subset \cdv\) of
    all clients, assuming a secure VOPRF protocol \voprfschemeshort, the \(\indcpa\) security of \(\skeschemeshort\), and the privacy of \(\adssschemeshort{\threshold}{n}\).
    \label{thm:privacymalaggserverp2}
\end{theorem}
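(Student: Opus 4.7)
The plan is to exhibit a PPT simulator $\sdv$ that, given access to $\F_\P$, $\F_\voprfschemeshort$, and the random oracle $\RO$ used inside the VOPRF finalization, produces a view for $\adv$ indistinguishable from the real execution. I would structure the simulation around the three protocol phases. During the randomness phase, $\sdv$ plays the honest randomness server toward the corrupted clients in $\cdv_\adv$: it answers their VOPRF requests through $\F_\voprfschemeshort$ and records every query to $\RO$, which (by the assumed form $\RO(x,f(\msk,x))$ of the finalize step) lets it extract the multiset $\X_\adv$ of effective adversarial inputs. After the key-rotation point, $\sdv$ commits $\W = \cdv_\adv$ and $\X_\adv$ to the leakage function and to $\F_\P$.

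Next, $\sdv$ receives from $\F_\P$ the map $\Y$ of threshold-crossing measurements (with auxiliary data) and from $\leakagenoinput$ the cardinalities $|\N_\iota|$ of every equivalence class. For each honest client whose measurement lies in a class appearing in $\Y$, the simulator holds the true $(x_\iota,\aux_j)$ and runs the message phase honestly, programming $\RO$ so that the client's derived $r$ matches a freshly sampled pseudorandom string. For each honest client whose measurement belongs to a sub-threshold class, $\sdv$ picks one uniformly random tag per class and reuses it across the $|\N_\iota|$ honest members of that class, samples each share $s_i$ uniformly from $\FF_p$, and sets $c_i$ to the encryption of a constant dummy plaintext of the appropriate length under a freshly sampled symmetric key.

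Indistinguishability would then be argued via a standard hybrid sequence. The first hop replaces real VOPRF outputs on inputs not contained in $\X_\adv$ with uniform strings, justified by the pseudorandomness of \voprfschemeshort and the $\F_\voprfschemeshort$ ideal functionality; this in particular makes each sub-threshold tag $t_i = r_{i,3}$ look uniform while remaining consistent within a class. The second hop invokes share privacy of $\adssschemeshort{\threshold}{n}$: the number of honest shares in every sub-threshold class is strictly less than $\threshold$, so even when combined with the adversary's own shares (which $\sdv$ knows), their joint distribution is statistically indistinguishable from uniform shares. The third hop applies $\indcpa$ security of $\skeschemeshort$ to swap each $c_i$ in a sub-threshold class for an encryption of a constant, legitimate because the associated key is derived from an independent uniform $r_{i,1}$ whose value is not recoverable from fewer than $\threshold$ shares.

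I expect the main obstacle to be the bookkeeping around the VOPRF/random-oracle interface, and in particular arguing that an adversarial late query $\RO(x^\ast,\cdot)$ that coincidentally matches a hidden honest measurement only occurs with negligible probability: if it did, $\adv$ could recompute the correct key $K_i$ and detect the ciphertext swap. This is precisely the content of the ``key rotation before the aggregation phase'' condition modeled in Section~\ref{sec:security-considerations} and baked into the leakage function, which forces $\adv$ to commit to $\X_\adv$ before any messages are seen, so that subsequent such hits reduce either to guessing a value of non-trivial min-entropy or to distinguishing the VOPRF output under the deleted key.
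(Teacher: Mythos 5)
Your overall strategy matches the paper's proof closely: the simulator extracts $\X_\adv$ from the adversary's $\RO$ queries during the randomness phase, commits them to $\F_\P$ and the leakage function, reconstructs threshold-crossing classes honestly from $\Y$, replaces hidden classes with random shares, random-but-class-consistent tags, and dummy ciphertexts, and the hybrid sequence (VOPRF/ROM, share privacy, $\indcpa$) is the same one the paper uses in its four game hops.

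There is, however, one genuine gap in your simulator: you apply the dummy treatment to \emph{every} sub-threshold class, whereas the paper's simulator carves out the set $\N_\adv$ of sub-threshold classes whose measurement the adversary itself queried to $\voprfschemeshort.\RO$ during the randomness phase, and for those it builds the honest clients' messages from the programmed value $z = \RO[x]$ rather than from fresh randomness. This carve-out is necessary, not cosmetic. Consider an adversary that, before key rotation, obtains the VOPRF evaluation on a plausible low-entropy measurement $x^\ast$ (a Sybil-style probe); it then knows the true $r^\ast = (r^\ast_1,r^\ast_2,r^\ast_3)$, hence the true tag $r^\ast_3$ and key $K^\ast$. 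If fewer than $\threshold$ honest clients hold $x^\ast$, the real execution still shows the adversary exactly $|\N_\iota|$ honest messages carrying tag $r^\ast_3$ and decrypting under $K^\ast$, while your simulation shows none (the class received a fresh random tag and a fresh key). This distinguishes the two worlds with probability essentially $1$. Your closing paragraph addresses only the complementary situation of a \emph{late} query after key rotation, where the adversary can no longer form the correct input $(x,f(\msk,x))$ to $\RO$ and negligibility is the right argument; early queries on predictable values are not negligible-probability events, and the model deliberately absorbs them into the leakage (the partition in $\leakagenoinput$ includes $\X_\adv$ precisely so the simulator knows which honest clients fall into adversarially probed classes and can render them consistently). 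Adding the $\N_\adv$ branch restores the paper's argument, and note that share privacy is then invoked only for sub-threshold classes \emph{outside} $\N_\adv$.
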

\begin{proof}
    We construct our PPT simulator as follows.
\begin{itemize}
    \item \(\sdv\) runs \(\pparams \leftarrow
    \voprfschemeshort.\poprfsetup{\secparam}\) and \((\msk',\mpk') \leftarrow
    \voprfschemeshort.\mskgen{\pparams}\) and sends \(\pparams\)
    to \adv.
    \item \(\sdv\) handles queries made by \(\adv\) to
    \(\voprfschemeshort\) by interacting with the ideal functionality \(\F_{\voprfschemeshort}\).
    \item When \(\sdv\) receives queries \((x,y)\) to the
    random oracle \voprfschemeshort.\RO, it first checks that \(y
    = f(\msk',x)\). If this equality holds, it either returns
    \(\RO[x]\), or samples \(z \sample \bin^{3\omega}\), sets
    \(\RO[x] = z_x\), and then returns \(z_x\). If the inequality
    does not hold, it returns a randomly sampled value.
    \item When \(\sdv\) receives \((\threshold,(c_i,s_i,t_i)_{i \in
    \cdv_\adv})\) from the adversary, it sends all
    inputs \(\X_\adv\) that it received \RO queries for, with
    the set \(\aux_\adv = \{\perp\}_{\iota \in |\X_\adv|}\) and
    \(\threshold\), to \(\F_{\P}\). It
    receives \(\Y\) as output from \(\F_{\P}\), and \(\leakage{\X}\).
    \item Let \(\N\) be the collection of subsets of all
    indices returned by \(\leakage{\X}\), let \(\N_\adv
    \subset \N\) denote all subsets
    that contain inputs taken from \(\X_\adv\), and let \(\Z = \emptyset\). 
    \item For each \((x_j,\aux_j) \in \Y\):
    \begin{itemize}
        \item If \(z_{x_j} = \RO[x_j]\) is not empty, then let:
        \begin{align}
            \begin{split}
                K_{x_j} &\leftarrow \derive{z_{x_j}[1],\secparam};\\
                c_{x_j} &\leftarrow \skeschemeshort.\enc(K_{x_j},x\|\aux_j);\\
                s_{x_j} &\leftarrow \adssschemeshort{\threshold}{n}.\share{z_{x_j}[1];z_{x_j}[2]};\\
                t_{x_j} &\leftarrow z_{x_j}[3].
            \end{split}
            \label{eq:random_oracle}
        \end{align}
        Else, sample \(z_{x_j} \sample \bin^{3\omega}\) and construct
        \((c_{x_j},s_{x_j},t_{x_j})\) as in
        Equation~\eqref{eq:simulate}.
        \item Let \(\Z[j] = (c_{x_j},s_{x_j},t_{x_j})\).
    \end{itemize}
    \item For each \(j\) where \(\aux_j = \perp\), delete \(\Z[j]\).
    \item For each subset \(\N_\iota \in \leakage{\X}\) where \(|\N| \leq
    \threshold\): 
    \begin{itemize}
        \item If \(\N_\iota \in \N_\adv\): for each \(\hat{\iota}
        \in \N_\iota\): let \(z_{\hat{\iota}} = \RO[x_{\hat{\iota}}]\), and
        construct \((c_{\hat{\iota}},s_{\hat{\iota}},t_{\hat{\iota}})\) as in
        Equation~\eqref{eq:simulate}.
        \item Else, sample \(K_\iota \sample \bin^\secpar\), and then for each \(\hat{\iota} \in \N_\iota\) compute:
        \begin{align}
            \begin{split}
                c_{\hat{\iota}} &\leftarrow \skeschemeshort.\enc(K_\iota,0);\\
                s_{\hat{\iota}} &\sample \FF_p;\\
                t_{\hat{\iota}} &\sample \bin^{\omega}.
            \end{split}
            \label{eq:simulate}
        \end{align}
    \end{itemize}
\end{itemize}

\begin{figure*}[t]
    \centering
    \large
    \scalebox{.68}{%
        \begin{tabular}{cccccc}\toprule
            Step & \voprfschemeshort.\RO queries & \(i \in \widehat{\cdv_\adv}\) & \((j \notin \widehat{\cdv_\adv}) \wedgespace (x_j \in \Y)\) & \((\iota \notin \widehat{\cdv_\adv}) \wedgespace (x_\iota \notin \Y)\) & Hop \\ [0.5ex] \midrule
            \(\game{0}\) & (\(x_j,f(\msk,x)\)) & \((c_i,s_i,t_i)\) & \((\skeschemeshort.\enc(K_j,x_j \| \aux_j),\adssschemeshort{\threshold}{n}.\share{r_{j,1};r_{j,2}},r_{j,3})\) & \((\skeschemeshort.\enc(K_\iota,x_\iota \| \aux_\iota),\adssschemeshort{\threshold}{n}.\share{r_{\iota,1};r_{\iota,2}},r_{\iota,3})\) & --- \\ 
            \(\game{1}\) & (\(x_j,\mathcolorbox{blue!20}{w \chkequal f(\msk,x)}\)) & \((c_i,s_i,t_i)\) & \((\skeschemeshort.\enc(K_j,x_j \| \aux_j),\adssschemeshort{\threshold}{n}.\share{r_{j,1};r_{j,2}},r_{j,3})\) & \((\skeschemeshort.\enc(K_\iota,x_\iota \| \aux_\iota),\adssschemeshort{\threshold}{n}.\share{\widetilde{r_{\iota,1}};\widetilde{r_{\iota,2}}},\widetilde{r_{\iota,3}})\) & ROM \\ 
            \(\game{2}\) & (\(x_j,\mathcolorbox{blue!20}{w \chkequal \F_{\voprfschemeshort}(x_j)}\)) & \((c_i,s_i,t_i)\) & \((\skeschemeshort.\enc(K_j,x_j \| \aux_j),\adssschemeshort{\threshold}{n}.\share{r_{j,1};r_{j,2}},r_{j,3})\) & \((\skeschemeshort.\enc(K_\iota,x_\iota \| \aux_\iota),\adssschemeshort{\threshold}{n}.\share{\widetilde{r_{\iota,1}};\widetilde{r_{\iota,2}}},\widetilde{r_{\iota,3}})\) & \voprf \\ 
            \(\game{3}\) & (\(x_j,\F_{\voprfschemeshort}(x_j)\)) & \((c_i,s_i,t_i)\) & \((\skeschemeshort.\enc(K_j,x_j \| \aux_j),\adssschemeshort{\threshold}{n}.\share{r_{j,1};r_{j,2}},r_{j,3})\) & \((\skeschemeshort.\enc(K_\iota,x_\iota \| \aux_\iota),\mathcolorbox{blue!20}{s_\iota \sample \FF_p},\widetilde{r_{\iota,3}})\) & \adss{\threshold}{n} \\ 
            \(\game{4}\) & (\(x_j,\F_{\voprfschemeshort}(x_j)\)) & \((c_i,s_i,t_i)\) & \((\skeschemeshort.\enc(K_j,x_j \| \aux_j),\adssschemeshort{\threshold}{n}.\share{r_{j,1};r_{j,2}},r_{j,3})\) & \((\mathcolorbox{blue!20}{\skeschemeshort.\enc(K_\iota,0 \ldots 0)},s_\iota \sample \FF_p,\widetilde{r_{\iota,3}})\) & \(\indcpa\) \\ \bottomrule
        \end{tabular}
    } 
    \caption{Game hops required to prove security of
    Theorem~\ref{thm:privacymalaggserverp2}. \game{0} corresponds to
    the real world execution of \(\P\), and \game{4} corresponds to
    the PPT simulator that interacts with the ideal functionality (Section~\ref{sec:security-model}).
    The third, fourth, and fifth columns correspond to the way that
    client messages are constructed in each game hop. The differences between each game hop are highlighted in blue.} 
    \label{table:thmp2aggserver}
\end{figure*}
In the following claims, we prove that the simulation is
indistinguishable to the adversary from the real protocol via a series
of game-hops. For a broad overview of the security proof,
see Figure~\ref{table:thmp2aggserver}.
\begin{claim}
    \(\game{0} \stat \game{1}\) in the random oracle model.
    \label{claim:thm1G1}
\end{claim}
\begin{proof}
    In \(\game{0}\), the execution is as in protocol \(\P\). In
    \(\game{1}\), all queries \((x,w)\) for the \(\voprfschemeshort.\RO\)
    are handled by first checking that \(w =
    f(\msk,x)\), which can be done using the master
    secret key sampled by the simulator. If the check passes, then the query is
    answered by either returning \(\voprfschemeshort.\RO[x]\) (if
    non-empty), or sampling a new value and assigning that to
    \(\voprfschemeshort.\RO[x]\), before returning it. If the check
    does not pass, then the query is answered by simply returning a
    random value. Note that the pseudorandomness property of
    \voprfschemeshort ensures that the two games are
    indistinguishable.
\end{proof}
\begin{claim}
    \(\game{1} \comp \game{2}\) by the security of \voprfschemeshort.
    \label{claim:thm1G2}
\end{claim}
\begin{proof}
    In \(\game{2}\) the simulator no longer has access to \(\msk\), and
    only has access to the ideal functionality
    \(\F_{\voprfschemeshort}\). Any blind evaluation query for \(x'\) is
    answered by sending the query to the corresponding interface of
    \(\F_{\voprfschemeshort}\), and returning the response to \adv. When
    \(\adv\) makes a query \((x,w)\) to \(\voprfschemeshort.\RO\) to finalize the VOPRF
    result, \(\sdv\) sends \((x)\) to the evaluation interface as a client input for
    \(\F_{\voprfschemeshort}\) to check if the queries are admissible in
    the same way as \game{1}. Note that the difference between \game{1}
    and \game{2} can be simulated by an adversary \(\bdv\) against
    \voprfschemeshort.
\end{proof}
\begin{claim}
    \(\game{2} \comp \game{3}\) by the share privacy of \(\adssschemeshort{\threshold}{n}\). 
    \label{claim:thm1G3}
\end{claim}
\begin{proof}
    In \game{3}, the simulator replaces all values \((s_i,t_i)\)
    sent by honest clients \(\CC_i\) where \(x_i \notin \Y\) and they
    belong to some \(\N_\iota \in \widehat{\N} \setminus \N_\adv\) (i.e., never queried
    to \RO by the adversary), with random values. The distinguishing advantage of the
    two games can be bounded by an adversary trying to break the
    privacy requirements of \(\adssschemeshort{\threshold}{n}\), since
    there are less than \(\threshold\) such shares. Moreover, the
    distribution of \(t_\iota\) is already random due to never having
    learnt the value from the output of \voprfschemeshort.
\end{proof}
\begin{claim}
    \(\game{3} \comp \game{4}\) by the \(\indcpa\) security of \skeschemeshort.
    \label{claim:thm1G4}
\end{claim}
\begin{proof}
    In \(\game{4}\), the only difference is that any message from an
    honest client \(\CC_i\) to the server \(\SS\) that encodes a measurement \(x_\iota\) that has not
    previously been queried to \(\voprfschemeshort.\RO\) are
    modified. In particular, these messages replace the encrypted
    ciphertext of the encoded message \((x_\iota \| \aux_i)\) with an
    encryption of all zeros (matching the length). The difference
    between these two games can be simulated by an adversary
    \(\bdv\) attempting to break the \(\indcpa\) security of
    \(\skeschemeshort\), since the encryption key is derived from
    randomness that \(\adv\) never witnessed. Note that the clients
    \(\CC_i\) that belong to this set can be learned from the output
    \Y and the output of the leakage function \(\leakage{\X}\).
\end{proof}

Note that the execution in \(\game{4}\) is identical to the view
described by the simulator above. Therefore, putting
Claim~\ref{claim:thm1G1}, Claim~\ref{claim:thm1G2},
Claim~\ref{claim:thm1G3}, and Claim~\ref{claim:thm1G4} together, we have
that the distinguishing advantage of the real-world execution and
the ideal world simulation is negligible and the proof of
Theorem~\ref{thm:privacymalaggserverp2} is complete.
\end{proof}

\begin{theorem}{\emph{(Malicious randomness
    server)}} The protocol \(\P\) is secure against any
    \(\adv\) that corrupts \OO and some subset \(\cdv_\adv \subset \cdv\) of
    all clients, assuming the security of \voprfschemeshort, and the \(\indcpa\) security of \skeschemeshort.
    \label{thm:privacymalrandserver}
\end{theorem}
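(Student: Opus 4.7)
The plan is to construct a PPT simulator $\sdv$ that, interacting with $\F_{\P}$, reproduces the view of any adversary $\adv$ controlling $\OO$ and a subset $\cdv_\adv \subset \cdv$ of the clients. The crucial structural difference from Theorem~\ref{thm:privacymalaggserverp2} is that $\SS$ is now honest: $\adv$ never learns the aggregation output $\Y$, and, more importantly, never sees any of the triples $(c_i,s_i,t_i)$ produced by honest clients during the message phase, since those flow directly to the honest $\SS$. Consequently, the only adversary-visible interaction with each honest $\CC_i$ is the randomness-phase VOPRF exchange on the secret input $x_i$.

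Concretely, $\sdv$ forwards the adversary's VOPRF setup and key-rotation messages intact. For each honest client $\CC_i$, $\sdv$ does not know $x_i$ and therefore cannot honestly compute $\rq_i = \voprfschemeshort.\req{x_i}$; instead it substitutes the real VOPRF with the ideal functionality $\F_{\voprfschemeshort}$, delivering to $\adv$ whatever request the server interface of the ideal functionality would produce, and discarding the $\adv$-generated response (no further adversary-visible output ever depends on it). For corrupted clients in $\cdv_\adv$, $\adv$ chose $\msk$ itself and dictates both client and server behaviour, so no simulation effort is required.

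The indistinguishability argument reduces to two hybrid transitions. In $\game{1}$ I replace the real $\voprfschemeshort$ protocol between each honest client and $\adv$ with the ideal functionality $\F_{\voprfschemeshort}$; indistinguishability follows from the UC security of $\voprfschemeshort$ against malicious servers invoked by the theorem hypothesis. In $\game{2}$, to cover any deployment in which $\adv$ can passively observe honest-client-to-$\SS$ traffic, I replace each honest client's $c_i$ with an encryption of a zero string of the same length; this hop reduces to $\indcpa$ security of $\skeschemeshort$, justified by the fact that the derived key $K_i = \derive{r_{i,1},\secparam}$ is pseudorandom to $\adv$ whenever $x_i$ was not explicitly evaluated by $\adv$ through the VOPRF. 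After these hops $\sdv$'s output is independent of honest inputs, aside from whatever $\F_{\voprfschemeshort}$ already reveals to its server interface.

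The main obstacle is the first hop. I must argue that the VOPRF's client-input-privacy guarantee still binds even though $\adv$ knows $\msk$ and can adaptively compute $f(\msk,\cdot)$ on arbitrary points of its choosing. This is precisely the blindness property of $\voprfschemeshort$: the honest client's request $\rq_i$ is independent of $x_i$ given the freshly sampled blinding factor that $\adv$ never sees, and this is exactly what the ideal functionality $\F_{\voprfschemeshort}$ captures on its server-side interface. Once that reduction goes through, the remaining $\indcpa$ step is a routine application of the symmetric-encryption game and the pseudorandomness of $\derive{\cdot,\secparam}$ on a seed never revealed to $\adv$.
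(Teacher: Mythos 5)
Your proposal is correct and follows essentially the same two-step argument as the paper: simulate the randomness phase by routing honest clients' VOPRF interactions through \(\F_{\voprfschemeshort}\) (relying on the client-privacy/blindness guarantee against a malicious server that knows \(\msk\)), and simulate the encrypted client traffic that \(\OO\) may relay toward \(\SS\) in the oblivious-proxy deployment as all-zero encryptions via \(\indcpa\). One caution: if \(\adv\) could observe the raw triples \((c_i,s_i,t_i)\) rather than an outer encryption, replacing only \(c_i\) would not suffice (simulating \(s_i\) and \(t_i\) would additionally require share privacy and the random-oracle argument), but that scenario is outside the paper's model and its stated assumptions, which match your reading that \(\OO\) sees at most encrypted blobs.
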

\begin{proof}
    By the security of \voprfschemeshort, the simulator can simulate the view of \OO during the randomness phase of the protocol. During the aggregation phase, the server \OO also witnesses encrypted client messages that are destined for the aggregation server. Such encrypted messages can be simulated as encryptions of all zeroes in every case by the \(\indcpa\) security of \skeschemeshort. This simulates the entire view of \OO.
    
    Note that interactions with the ideal functionality can be made
    without submitting any adversarial client inputs, since these may be
    arbitrarily corrupted. Note that the simulator can thus only maintain
    correctness for \(\SS\) up to the output learnt purely from honest clients.
\end{proof}

\begin{theorem}{\emph{(Malicious clients)}} The protocol \(\P\) is
    secure against any adversary \(\adv\) corrupting
    some subset \(\cdv_\adv \subset \cdv\) of all clients, assuming the security of \voprfschemeshort, the
    \(\indcpa\) security of \(\skeschemeshort\), and the privacy of
    \(\adssschemeshort{\threshold}{n})\).
    \label{thm:robustness}
\end{theorem}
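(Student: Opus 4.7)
The plan is to build a PPT simulator $\sdv$ that plays the roles of the honest servers $\SS$ and $\OO$ towards the corrupted clients in $\cdv_\adv$, and that translates the adversary's effective inputs into messages submitted to $\F_{\P}$. Since corrupted clients receive protocol messages only during the randomness phase (the aggregation server transmits nothing back to clients), the adversary's view consists solely of the $\voprfschemeshort$ public material $(\pparams, \mpk)$ and the responses to VOPRF evaluation and random-oracle queries. I would simulate the public material by having $\sdv$ generate a fresh keypair as in the real protocol, and simulate the queries by routing them through $\F_{\voprfschemeshort}$, mirroring the handling in the proofs of Theorems~\ref{thm:privacymalaggserverp2} and~\ref{thm:privacymalrandserver}.

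The central technical step is input extraction so that $\F_{\P}$'s output reproduces $\SS$'s real output. The simulator logs every VOPRF query $x$ made by any corrupted client together with the resulting $r_x = r_{x,1} \| r_{x,2} \| r_{x,3}$ obtained through $\F_{\voprfschemeshort}$. For each adversarial aggregation message $(c_i, s_i, t_i)$, $\sdv$ searches for a logged $x$ with $t_i = r_{x,3}$; if found, $\sdv$ decrypts $c_i$ under $K = \derive{r_{x,1}, \secparam}$ to recover $(x_i', \aux_i')$ and submits $(x_i', \aux_i')$ as the $i$-th corrupted client's input to $\F_{\P}$. If no such $x$ exists, $\sdv$ submits a fresh, distinct dummy measurement that is guaranteed to form a singleton group in the ideal world, matching the fact that no such tag can correctly combine with an honest grouping in the real protocol. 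Indistinguishability of the remaining transcript then follows from a short sequence of game hops mirroring Theorem~\ref{thm:privacymalaggserverp2}: replace $\voprfschemeshort$ with $\F_{\voprfschemeshort}$, replace honest ciphertexts with encryptions of zero via IND-CPA of $\skeschemeshort$, and replace honest shares with uniform elements of $\FF_p$ via the share privacy of $\adssschemeshort{\threshold}{n}$.

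The main obstacle will be bounding two pathological events. First, an adversarial message might carry a tag $t_i$ that collides with the yet-unqueried $r_{x,3}$ of some honest measurement $x$; I would bound this by $2^{-\omega}$ per message using the pseudorandomness of $\voprfschemeshort$ outputs on unqueried inputs, summed over polynomially many corrupted messages. Second, a corrupted client may legitimately know $r_x$ (having queried $x$) but place a different plaintext inside $c_i$, so that the inner decryption-consistency check of the aggregation phase (lines 9--10 of Figure~\ref{fig:protocol}) fires and returns $\perp$ for that grouping. The simulator must be able to reproduce this $\perp$ outcome through its ideal-world interaction; this is the most delicate part of the argument, because the stated ideal functionality does not explicitly model such aborts and may need to be extended with an adversarial-abort capability giving $\adv$ the ability to force $\perp$ on any grouping that contains a message it authored. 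Once this subtlety is settled, combining the extraction argument with the above hybrids completes the proof.
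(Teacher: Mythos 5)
Your proposal follows essentially the same route as the paper's proof: extract the effective adversarial inputs from the \(\voprfschemeshort\) query/random-oracle log, submit them to \(\F_{\P}\), and recycle the hybrid sequence of Theorem~\ref{thm:privacymalaggserverp2} for the rest of the transcript. The one substantive difference is the treatment of ill-formed adversarial messages. The paper's simulator performs an explicit well-formedness check---it verifies that \(c_i\) decrypts, under the key derived from the logged VOPRF output for some queried \(x_i\), to that same \(x_i\), and that \(s_i\) and \(t_i\) are consistent with it---and forwards only well-constructed messages to \(\F_{\P}\). You instead forward whatever plaintext the decryption yields and then confront the resulting mismatch with the real protocol head-on. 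The subtlety you flag---that a corrupted client who legitimately knows \(r_x\) can plant a divergent plaintext, trip the decryption-consistency check in the aggregation phase, and force \(\perp\) on a grouping, while the stated ideal functionality has no abort interface to mirror this---is precisely the point the paper elides with the phrase ``provides the same correctness guarantees as in the real-world execution.'' The paper's filtering strategy has the mirror-image problem (the ideal world reveals the grouping while the real world returns \(\perp\)), so your proposed remedy of extending \(\F_{\P}\) with an adversarial abort (or, equivalently, reading the aggregation phase as discarding only the offending ciphertext) is a legitimate and arguably more careful way to close the argument. Your \(2^{-\omega}\)-per-message bound on tag collisions with unqueried honest outputs is implicit in the paper's appeal to the Theorem~\ref{thm:privacymalaggserverp2} hybrids; making it explicit is a minor improvement rather than a divergence.
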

\begin{proof}
    This proof follows an almost identical set of transitions
    to the proof of Theorem~\ref{thm:privacymalaggserverp2}. Note that the
    simulator can simulate all messages in the same way, except that it only
    sends adversarial client messages to the ideal functionality
    that are \emph{well-constructed}. It can check whether messages are
    well-constructed by checking that the ciphertext \(c_i\) encrypts a
    value \(x_i\) that was received in the queries to
    \(\voprfschemeshort.\RO\), and using a correctly derived key. Note
    that this can be checked using the combination of inputs and outputs
    derived from \(\voprfschemeshort.\RO\). Similarly, the simulator can
    check whether \(s_i\) and \(t_i\) are consistent with the value of
    \(x_i\) that is encrypted. The simulator is then able to construct a set
    of messages using the output of \(\F_{\P}\) that
    provides the same correctness guarantees as in the real-world execution.
    The simulation for these messages is identical to the simulation in
    the proof of Theorem~\ref{thm:privacymalaggserverp2}.
\end{proof}

\point{Security for \toollite}
\label{app:starlite-sec}
The security of the \toollite protocol only holds when the client input distribution has sufficient min-entropy. Simulating security of unrevealed measurements against a malicious aggregation server is fairly trivial since the simulator can simply construct dummy-encodings, and rely upon the fact that the aggregation server is unable to guess which measurement is encoded with anything other than negligible probability. Otherwise, the simulation follows a similar model as the proof of Theorem~\ref{thm:privacymalaggserverp2}. In the case of malicious clients, security is ensured by modelling the randomness derivation process as interacting with a random oracle model, and the proof follows identically to Theorem~\ref{thm:robustness}.

\end{document}